 \journalname{Designs, Codes and Cryptography}
\DeclareSymbolFont{cyrletters}{OT2}{wncyr}{m}{n}
\DeclareMathSymbol{\Sha}{\mathalpha}{cyrletters}{"58}
\newcommand{\ep}{\varepsilon}
\newcommand{\E}{\mathbb{E}}
\renewcommand{\H}{\mathcal{H}}
\newcommand{\R}{\mathcal{R}}
\newcommand{\C}{\mathcal{C}}
\newcommand{\F}{\mathbb{F}}
\renewcommand{\P}{\mathbb{P}}
\newcommand{\rank}{\mathrm{rk}}
\renewcommand{\l}{\ell}
\renewcommand{\d}{\mathrm{d}}
\newcommand{\Cov}{\mathrm{Cov}}
\newcommand{\Gaubin}{\genfrac{[}{]}{0pt}{}}
\begin{document}

\title{Decoding error probability of random parity-check matrix ensemble over the erasure channel \thanks{The research of Fang-Wei Fu was supported in part by the National Key Research and Development Program of China under Grant 2018YFA0704703, in part by the National Natural Science Foundation of China under Grant 61971243, in part by the Natural Science Foundation of Tianjin under Grant 20JCZDJC00610, and in part by the Nankai Zhide Foundation. The research of M. Xiong was supported by RGC grant number 16306520 from Hong Kong.
}
}

\titlerunning{Decoding error of random PC matrix ensemble over EC}        

\author{Chin Hei Chan \and Fang-Wei Fu \and Maosheng Xiong 
}


\institute{Chin Hei Chan \at
              Mathematics Department, Hong Kong University of Science and Technology, Clear Water Bay, Hong Kong\\
              \email{chchanam@connect.ust.hk}           
            \and
            Fang-Wei Fu \at
            Chern Institute of Mathematics and LPMC, and the Tianjin Key Laboratory of Network and Data Security Technology, Nankai University, Tianjin, 300071, China \\
            \email{fwfu@nankai.edu.cn}
           \and
           Maosheng Xiong \at
            Mathematics Department, Hong Kong University of Science and Technology, Clear Water Bay, Hong Kong\\              \email{mamsxiong@ust.hk}          
    }

\date{Received: date / Accepted: date}

\maketitle
\begin{abstract}
Using tools developed in a work by Shen and the second author, in this paper we carry out an in-depth study on the average decoding error probability of the random parity-check matrix ensemble over the erasure channel under three decoding principles, namely unambiguous decoding, maximum likelihood decoding and list decoding. We obtain explicit formulas for the average decoding error probabilities of the random parity-check matrix ensemble under these three decoding principles and compute the error exponents. Moreover, for unambiguous decoding, we compute the variance of the decoding error probability of the random parity-check matrix ensemble and the error exponent of the variance, which implies a strong concentration result, that is, roughly speaking, the ratio of the decoding error probability of a random linear code in the ensemble and the average decoding error probability of the ensemble converges to 1 with high probability when the code length goes to infinity.

\keywords{Random parity-check matrix ensemble \and parity-check codes \and erasure channel \and decoding error probability \and error exponent \and list decoding \and maximum likelihood decoding \and unambiguous decoding}
\end{abstract}

\section{Introduction}

\subsection{Background}

In digital communication, it is common that messages transmitted through a public channel may be distorted by the channel noise. The theory of error-correcting codes is the study of mechanisms to cope with this problem. This is an important research area with many applications in modern life. For example, error-correcting codes are widely employed in cell phones to correct errors arising from fading noise during high frequency radio transmission. One of the major challenges in coding theory remains to construct new error-correcting codes with good properties and to study their decoding and encoding algorithms.

In a binary erasure channel (BEC), a binary symbol is either received correctly or totally erased with probability $\ep$. The concept of BEC was first introduced by Elias in 1955 \cite{InfThe}. Together with the binary symmetric channel (BSC), they are frequently used in coding theory and information theory because they are among the simplest channel models, and many problems in communication theory can be reduced to problems in a BEC. Here we consider more generally a $q$-ary erasure channel in which a $q$-ary symbol is either received correctly, or totally erased with probability $\ep$.

The problem of decoding linear codes over the erasure channel has received renewed attention in recent years due to their wide application in the internet and the distributed storage system in analyzing random packet losses \cite{Byers,Luby,Lun}. Three important decoding principles, namely \emph{unambiguous decoding}, \emph{maximum likelihood decoding} and \emph{list decoding}, were studied in recent years for linear codes over the erasure channel, the corresponding decoding error probabilities under these principles were also investigated (see \cite{Didier,Lemes,FFW,Weber} and reference therein). %

In particular in \cite{FFW}, upon improving previous results, the authors provided a detailed study on the decoding error probabilities of a general $q$-ary linear code over the erasure channel under the three decoding principles. Via the notion of $q^\l$-incorrigible sets for linear codes, they showed that all these decoding error probabilities can be expressed explicitly by the $r$-th support weight distribution of the linear codes. As applications they obtained explicit formulas of the decoding error probabilities for some of the most interesting linear codes such as MDS codes, the binary Golay code, the simplex codes and the first-order Reed-Muller codes etc. where the $r$-support weight distributions were known. They also computed the average decoding error probabilities of a random $[n,k]_q$ code over the erasure channel and obtained the error exponent of a random $[n,nR]_q$ code ($0<R<1$) under one of the decoding principles. The error exponents of a random $[n,nR]$ code under the other two decoding principles were obtained in \cite{Xio}.

\subsection{Statement of the main results}

In this paper we consider a different code ensemble, namely the \emph{random parity-check matrix ensemble} $\R_{m,n}$, that is, the set of all $m \times n$ matrices over $\F_q$ endowed with uniform probability, each of which is associated with a parity-check code as follows: for each $H \in \R_{m,n}$, the corresponding parity-check code $C_H$ is given by
\begin{equation}\label{PC}
C_H=\{\mathbf{x} \in \F_q^n: H\mathbf{x}^T=\mathbf{0}\}.
\end{equation}
Here boldface letters such as $\mathbf{x}$ denote row vectors.

The ensemble $\R_{m,n}$ has been studied for a long time and many strong results have been obtained. For example, in the classical work of Gallager \cite{Gallager2}, an upper bound of the average number of codewords of a given weight in $\R_{m,n}$ was obtained, from which information about the minimum-distance distribution of $\R_{m,n}$ can be derived (see \cite[Theorems 2.1-2.2]{Gallager2}); in \cite{Berk} (see also \cite{Di}) union bounds on the block erasure probabilities of the ensemble $\R_{m,n}$ was obtained under the maximum likelihood decoding (see \cite{Berk,Di}). More recently, the undetected error probability of the ensemble $\R_{m,n}$ was studied in the binary symmetric channel by Wadayama \cite{Wadayama} (i.e. $q=2$), and some bounds on the error probability under the maximum likelihood decoding principle were obtained in the $q$-ary erasure channel \cite{Fashandi,Liva1}. It is easy to see that $\R_{m,n}$ contains all linear codes in the random $[n,n-m]_q$ code ensemble considered in \cite{FFW}, but these two ensembles are different for two reasons: first, in the random $[n,k]_q$ code ensemble considered in \cite{FFW}, each $[n,k]_q$ code is counted exactly once, while in $\R_{m,n}$ each code is counted with some multiplicity as different choices for the matrix $H$ may give rise to the same code; second, some codes in $\R_{m,n}$ may have rates strictly larger than $1-\frac{m}{n}$ as the rows of $H$ may not be linearly independent.

It is conceivable that most of the codes in $\R_{m,n}$ have rate $1-\frac{m}{n}$, and the average behavior of codes in $\R_{m,n}$ should be similar to that of the random $[n,n-m]_q$ code ensemble considered in \cite{FFW}. We will show that this is indeed the case. Actually we will obtain stronger results, taking advantage of the fine algebraic structure of the ensemble $\R_{m,n}$, which may not be readily available in the random $[n,k]_q$ code ensemble. Such structure has been exploited in \cite{Wadayama}.

We first obtain explicit formulas for the average decoding error probability of the ensemble $\R_{m,n}$ over the erasure channel under the three different decoding principles. This is comparable to \cite[Theorem 2]{FFW} for the random $[n,k]_q$ code ensemble. Such formulas are useful as they allow explicit evaluations of the average decoding error probabilities for any given $m$ and $n$, hence giving us a meaningful guidance as to what to expect for a good $[n,n-m]_q$ code over the erasure channel.

\begin{theorem}\label{Exp}
Let $\R_{m,n}$ be the random matrix ensemble described above. Denote by $\Gaubin{i}{j}_q$ the Gaussian $q$-binomial coefficient and denote
\begin{eqnarray} \label{1:phim} \psi_m(i):=\prod_{k=0}^{i-1}(1-q^{k-m}), \quad 1 \le i \le m.  \end{eqnarray}
\begin{enumerate}
\item The average unsuccessful decoding probability of $\R_{m,n}$ under list decoding with list size $q^\l$, where $\l$ is a non-negative integer, is given by
\begin{equation}\label{Pld}
P_{\mathrm{ld}}(\R_{m,n},\l,\ep)=\sum_{i=1}^n\sum_{j=\l+1}^iq^{-mj}\psi_m(i-j)\Gaubin{i}{j}_q\binom{n}{i}\ep^i(1-\ep)^{n-i};
\end{equation}
\item The average unsuccessful decoding probability of $\R_{m,n}$ under unambiguous decoding is given by
\begin{eqnarray} \label{Pud} P_{\mathrm{ud}}(\R_{m,n},\ep)=\sum_{i=1}^n\left( 1-\psi_m(i)\right)\binom{n}{i}\ep^i(1-\ep)^{n-i};\end{eqnarray}
\item The average decoding error probability of $\R_{m,n}$ under maximum likelihood decoding is given by
\begin{eqnarray} \label{Pmld} P_{\mathrm{mld}}(\R_{m,n},\ep)=\sum_{i=1}^n\sum_{j=1}^i q^{-mj}(1-q^{-j})\psi_m(i-j)\Gaubin{i}{j}_q\binom{n}{i}\ep^i(1-\ep)^{n-i}.\end{eqnarray}
\end{enumerate}
\end{theorem}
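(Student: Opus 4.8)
The plan is to reduce all three quantities to a single combinatorial object: the distribution of the dimension of the \emph{erased subcode}. Fix a parity-check matrix $H \in \R_{m,n}$ and an erasure pattern $E \subseteq \{1,\dots,n\}$ of size $i$, and set $C_E(H) = \{\mathbf{c} \in C_H : \supp(\mathbf{c}) \subseteq E\}$. If $\mathbf{x}$ is transmitted, the receiver observes the unerased coordinates $\mathbf{x}|_{\bar E}$, and the codewords consistent with this observation form the coset $\mathbf{x}+C_E(H)$; hence the decoder's ambiguity set always has size $|C_E(H)| = q^{\dim C_E(H)}$, independent of $\mathbf{x}$. Writing $H_E$ for the $m \times i$ submatrix of $H$ formed by the columns indexed by $E$, the restriction map $\mathbf{c} \mapsto \mathbf{c}|_E$ identifies $C_E(H)$ with $\ker H_E$, so that $\dim C_E(H) = i - \rank H_E$. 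Using the characterization of decoding outcomes via ($q^\l$-)incorrigible sets from \cite{FFW}, this single integer controls all three principles: unambiguous decoding fails precisely when $\dim C_E(H) \ge 1$; list decoding with list size $q^\l$ fails precisely when $\dim C_E(H) \ge \l + 1$; and under maximum likelihood decoding, since (with uniform prior) the transmitted codeword is uniform over the coset of consistent ones, the conditional error probability equals $1 - q^{-\dim C_E(H)}$.

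The enabling observation -- the feature that makes $\R_{m,n}$ more tractable than the ensemble of \cite{FFW} -- is that when $H$ is uniform over $\R_{m,n}$ its columns are i.i.d.\ uniform over $\F_q^m$, so for every fixed $E$ of size $i$ the submatrix $H_E$ is a uniformly random $m \times i$ matrix, \emph{independent of which coordinates $E$ contains}. Consequently the ensemble average of any function of $\dim C_E(H)$ depends only on $i=|E|$. Averaging first over the ensemble and then over erasure patterns, each of the $\binom{n}{i}$ patterns of weight $i$ contributes the identical factor, so every formula takes the shape $\sum_i \binom{n}{i}\ep^i(1-\ep)^{n-i}$ times a conditional expectation that I compute once and for all.

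The computational core is therefore the rank distribution of a uniform $m \times i$ matrix. Counting such matrices of rank $r$ by first choosing an $(i-r)$-dimensional kernel in $\F_q^i$ and then an injection of the $r$-dimensional quotient into $\F_q^m$ gives $\Gaubin{i}{r}_q \prod_{k=0}^{r-1}(q^m - q^k)$ matrices of rank $r$. Setting $j = i - r = \dim C_E(H)$ and dividing by $q^{mi}$, the product telescopes via $\prod_{k=0}^{i-j-1}(q^m - q^k) = q^{m(i-j)}\psi_m(i-j)$ into the quantity \eqref{1:phim}, yielding the master formula
\begin{equation*}
\P\bigl[\dim C_E(H) = j \mid |E| = i\bigr] = q^{-mj}\,\Gaubin{i}{j}_q\,\psi_m(i-j), \qquad 0 \le j \le i.
\end{equation*}
This is exactly where the Gaussian binomial and $\psi_m$ of Theorem \ref{Exp} enter, and the clean simplification of the product into precisely $\psi_m(i-j)$ is the one step requiring care.

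Assembling the three statements is then bookkeeping. For unambiguous decoding I sum the master formula over $j \ge 1$, using $\sum_{j \ge 0}\P[\dim C_E(H)=j \mid |E|=i]=1$ to obtain the conditional failure probability $1-\psi_m(i)$, which gives \eqref{Pud}. For list decoding I sum over $j \ge \l+1$, producing \eqref{Pld} directly, and recovering \eqref{Pud} as the special case $\l=0$ -- a useful consistency check. For maximum likelihood decoding I weight the master formula by $1-q^{-j}$ and sum over $j \ge 1$, which gives \eqref{Pmld}. The main obstacle I anticipate is purely the clean evaluation and simplification of the rank distribution; once the master formula is in hand, the three results drop out by choosing the correct range of summation and the correct weight.
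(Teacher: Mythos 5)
Your proposal is correct and follows essentially the same route as the paper: both reduce all three formulas to the rank distribution of a uniform $m\times i$ matrix via the identity $\dim C_H(E)=\#E-\rank(H_E)$ together with the [FFW] characterization of decoding failure, and both then count matrices of a given rank to obtain $q^{-mj}\psi_m(i-j)\Gaubin{i}{j}_q$. The only cosmetic differences are that you count rank-$r$ matrices by choosing a kernel and an injection of the quotient (the paper chooses the image subspace and counts surjections onto it), and that you condition on the erasure pattern first rather than phrasing the computation as expectations $\E[\lambda_i^{(\l)}]$ of the incorrigible-set distribution.
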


\begin{remark}
To make a comparison of Theorem 1 with corresponding results for the random $[n,k]_q$ code ensemble, we take $k=n-m$ and rewrite the explicit formulas obtained in \cite{FFW,Xio} as follows (see \cite[Theorem 2]{FFW} and \cite[Section 6]{Xio}): Denote by $\C_{m,n}$ the random $[n,n-m]_q$ code ensemble.
\begin{enumerate}
\item The average unsuccessful decoding probability of $\C_{m,n}$ under list decoding with list size $q^\l$, where $\l$ is a non-negative integer, is given by
\begin{equation}\label{oPld}
P_{\mathrm{ld}}(\C_{m,n},\l,\ep)=\sum_{i=1}^n\sum_{j=\l+1}^i\frac{\psi_{n-m}(j)}{\psi_n(i)}q^{-mj}\psi_m(i-j)\Gaubin{i}{j}_q\binom{n}{i}\ep^i(1-\ep)^{n-i};
\end{equation}
\item The average unsuccessful decoding probability of $\C_{m,n}$ under unambiguous decoding is given by
\begin{eqnarray} \label{oPud} P_{\mathrm{ud}}(\C_{m,n},\ep)=\sum_{i=1}^n\left( 1-\frac{\psi_m(i)}{\psi_n(i)}\right)\binom{n}{i}\ep^i(1-\ep)^{n-i};\end{eqnarray}
\item The average decoding error probability of $\C_{m,n}$ under maximum likelihood decoding is given by
\begin{eqnarray} \label{oPmld} P_{\mathrm{mld}}(\C_{m,n},\ep)=\sum_{i=1}^n\sum_{j=1}^i \frac{\psi_{n-m}(i)}{\psi_n(i)}q^{-mj}(1-q^{-j})\psi_m(i-j)\Gaubin{i}{j}_q\binom{n}{i}\ep^i(1-\ep)^{n-i}.\end{eqnarray}
\end{enumerate}
It is easy to check that for all indices $i,j$ appearing in (\ref{Pld}) and (\ref{oPld}) we have $0<\frac{\psi_{n-m}(j)}{\psi_n(i)}<1$, this shows that
\[P_{\mathrm{ld}}(\C_{m,n},\l,\ep)<P_{\mathrm{ld}}(\R_{m,n},\l,\ep).\]
Similarly we also have 
\[P_{\mathrm{ud}}(\C_{m,n},\ep)<P_{\mathrm{ud}}(\R_{m,n},\ep),\quad P_{\mathrm{mld}}(\C_{m,n},\ep)<P_{\mathrm{mld}}(\R_{m,n},\ep).\]
With this respect, the random $[n,k]_q$ code ensemble behaves slightly better than the $\R_{m,n}$ ensemble in terms of the average decoding error probabilities.
\end{remark}

Next, letting $m=(1-R)n$ for $0 < R < 1$, we compute the error exponents of the average decoding error probability of the ensemble series $\{\R_{(1-R)n,n}\}$ as $n \to \infty$ under these decoding principles.
\begin{theorem}\label{Errexp}
Let the rate $0 < R < 1$ be fixed and $n \to \infty$.
\begin{enumerate}
\item For any fixed integer $\l \ge 0$, the error exponent $T_{\mathrm{ld}}(\l,\ep)$ for average unsuccessful decoding probability of $\{\R_{(1-R)n,n}\}$ under list decoding with list size $q^\l$ is given by
\begin{equation}\label{Errld}
T_{\mathrm{ld}}(\l,\ep)=\begin{cases}
0 &(1-\ep \leq R < 1)\\
(1-R)\log_q\left(\frac{1-R}{\ep}\right)+R\log_q\left(\frac{R}{1-\ep}\right) &\left(\frac{1-\ep}{1-\ep+q^{\l+1}\ep} < R < 1-\ep\right)\\
(\l+1)(1-R)-\log_q\left(1-\ep+q^{\l+1}\ep\right) &\left(0 < R \leq \frac{1-\ep}{1-\ep+q^{\l+1}\ep}\right).
\end{cases}
\end{equation}
\item The error exponents for average unsuccessful decoding probability of $\{\R_{(1-R)n,n}\}$ under unambiguous decoding and maximum likelihood decoding (respectively) are both given by
\begin{equation}\label{Errud}
T_{\mathrm{ud}}(\ep)=T_{\mathrm{mld}}(\ep)=\begin{cases}
0 &(1-\ep \leq R < 1)\\
(1-R)\log_q\left(\frac{1-R}{\ep}\right)+R\log_q\left(\frac{R}{1-\ep}\right) &\left(\frac{1-\ep}{1-\ep+q \ep} < R < 1-\ep\right)\\
1-R-\log_q \left( 1-\ep+q \ep\right) &\left(0 < R \leq \frac{1-\ep}{1-\ep+q \ep}\right).
\end{cases}
\end{equation}
\end{enumerate}
\end{theorem}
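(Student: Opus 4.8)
The plan is to read off the exponential decay rate directly from the explicit formulas in Theorem~\ref{Exp}. For each principle set $T_{\bullet}(\ep):=\lim_{n\to\infty}-\tfrac1n\log_q P_{\bullet}(\R_{(1-R)n,n},\ldots,\ep)$ with $m=(1-R)n$. Each probability is a sum of nonnegative terms indexed by $i$ (and an inner index $j$ or $\l$), and the number of terms is polynomial in $n$; hence by the standard Laplace principle (lower bound by the largest term, upper bound by the number of terms times the largest term, and $\tfrac1n\log_q(\mathrm{poly}(n))\to0$) the rate $\tfrac1n\log_q P_{\bullet}$ converges to the maximum, over the normalized index $\alpha:=i/n\in[0,1]$, of the per-term exponential rate. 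So everything reduces to a one-variable optimization, and the three cases in \eqref{Errld}--\eqref{Errud} will correspond to the location of the maximizer. Throughout I write $f(\alpha):=-\alpha\log_q\frac{\alpha}{\ep}-(1-\alpha)\log_q\frac{1-\alpha}{1-\ep}$ for the ``channel part'' coming from $\binom{n}{i}\ep^i(1-\ep)^{n-i}$ (via the entropy estimate $\tfrac1n\log_q\binom ni\to-\alpha\log_q\alpha-(1-\alpha)\log_q(1-\alpha)$); this $f$ is concave with unique maximum $f(\ep)=0$.

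The nontrivial input is the combinatorial weight $w(i,j):=q^{-mj}\psi_m(i-j)\Gaubin{i}{j}_q$, which I would observe is exactly the probability that a uniformly random $m\times i$ matrix over $\F_q$ has right-kernel dimension $j$, so that $w(i,j)\in[0,1]$ and $\sum_j w(i,j)=1$. Its asymptotics follow from $\log_q\Gaubin{i}{j}_q=j(i-j)+O(j)$ together with $\psi_m(i-j)=\prod_{k=0}^{i-j-1}(1-q^{k-m})$: for an index $j=\beta n$ with $\beta>0$ fixed and $\alpha-\beta<1-R$ the leading exponent is $\beta(\alpha-\beta-(1-R))n^2<0$, so $w(i,j)$ is super-exponentially small (order $q^{-\Theta(n^2)}$), while for $\alpha-\beta>1-R$ the factor $\psi_m(i-j)$ vanishes identically (it contains the zero factor $1-q^{0}$ once $i-j>m$). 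Consequently, at the exponential scale $q^{-\Theta(n)}$ the only indices $j$ that contribute are those bounded independently of $n$, and for such fixed $j$ one gets $\tfrac1n\log_q w(i,j)\to -j\,((1-R)-\alpha)$ when $\alpha<1-R$, which is strictly decreasing in $j$; whereas for $\alpha>1-R$ every bounded $j$ gives $w(i,j)=0$, the nullity being forced to be at least $(\alpha-(1-R))n$. This balance between $q^{-mj}$ and $\Gaubin{i}{j}_q$ is what collapses each inner sum to its smallest admissible index.

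Carrying out the collapse identifies all three exponents. For list decoding $\sum_{j\ge\l+1}w(i,j)=\P(\text{nullity}\ge\l+1)$ has rate $-(\l+1)((1-R)-\alpha)$ for $\alpha<1-R$ (dominated by $j=\l+1$) and rate $0$ for $\alpha>1-R$ (the probability equals $1$). For unambiguous decoding $1-\psi_m(i)=\sum_{j\ge1}w(i,j)$ is the $\l=0$ case. For maximum likelihood decoding $\sum_{\l\ge1}(1-q^{-\l})w(i,\l)$ is dominated by $\l=1$, the bounded factor $1-q^{-1}$ being exponentially irrelevant, so it has the same rate as unambiguous decoding; this already yields $T_{\mathrm{ud}}=T_{\mathrm{mld}}$ and shows \eqref{Errud} is the $\l=0$ specialization of \eqref{Errld}. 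Adding the channel part $f(\alpha)$, the problem becomes $T_{\mathrm{ld}}(\l,\ep)=-\max_{0\le\alpha\le1}\Phi_\l(\alpha)$, where $\Phi_\l(\alpha)=(\l+1)(\alpha-(1-R))+f(\alpha)$ for $\alpha\le1-R$ and $\Phi_\l(\alpha)=f(\alpha)$ for $\alpha\ge1-R$, the two pieces agreeing at $\alpha=1-R$.

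The maximization of $\Phi_\l$ is then elementary. On $\alpha\le1-R$ it has the unique critical point $\alpha^\ast=\frac{q^{\l+1}\ep}{1-\ep+q^{\l+1}\ep}$, for which $1-\alpha^\ast=\frac{1-\ep}{1-\ep+q^{\l+1}\ep}$. If $R\ge1-\ep$ then $\alpha=\ep$ lies in the branch $\alpha\ge1-R$, so $\Phi_\l$ attains its global maximum $0$ and $T_{\mathrm{ld}}=0$. If $\frac{1-\ep}{1-\ep+q^{\l+1}\ep}<R<1-\ep$ then $\alpha^\ast>1-R$, both branches are maximized at the common point $\alpha=1-R$, the maximum is $f(1-R)$, and $-f(1-R)=(1-R)\log_q\frac{1-R}{\ep}+R\log_q\frac{R}{1-\ep}$. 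If $0<R\le\frac{1-\ep}{1-\ep+q^{\l+1}\ep}$ then $\alpha^\ast\le1-R$, the interior maximum is attained, and direct substitution gives $\Phi_\l(\alpha^\ast)=-(\l+1)(1-R)+\log_q(1-\ep+q^{\l+1}\ep)$, i.e. the last case. Specializing $\l=0$ produces \eqref{Errud}. I expect the only genuine work to be the \emph{uniform} asymptotic control underlying the reduction: Stirling/entropy bounds for $\binom ni$, the $O(j)$-uniform estimate for $\log_q\Gaubin{i}{j}_q$, and the behavior of $\psi_m$ straddling the transition $i=m$, which together must legitimize both the passage from each sum over $i$ to a single maximized exponent and the collapse of the inner sum to its leading index.
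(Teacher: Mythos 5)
Your proposal is correct and follows essentially the same route as the paper's proof: the Laplace max-term reduction applied to the formulas of Theorem \ref{Exp}, the observation that $q^{-mj}\Gaubin{i}{j}_q$ balances to the exponent $j(i-j-m)$ so that the inner sum collapses to its smallest admissible index (with $\psi_m$ vanishing past $i-j>m$), and the identical concave one-variable optimization of $-(\l+1)(1-R-t)_+ +h(t)+t\log_q\ep+(1-t)\log_q(1-\ep)$ with the same three cases and the same critical point. The only cosmetic differences are your probabilistic reading of the weights as the nullity distribution of a random $m\times i$ matrix, and your treatment of maximum likelihood decoding by domination of the inner sum where the paper simply uses $\frac{1}{2}P_{\mathrm{ud}}\le P_{\mathrm{mld}}\le P_{\mathrm{ud}}$; both are immediate.
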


A plot of the function $T_{\mathrm{ld}}(\l,\ep)$ for $q=2,\ep=0.25,\l=0,1,2$ in the range $0 < R < 1$ is given by {\bf Fig. 1}.

\begin{figure}
\begin{center}
\includegraphics[angle=0,width=0.8 \textwidth,height=0.35 \textheight]{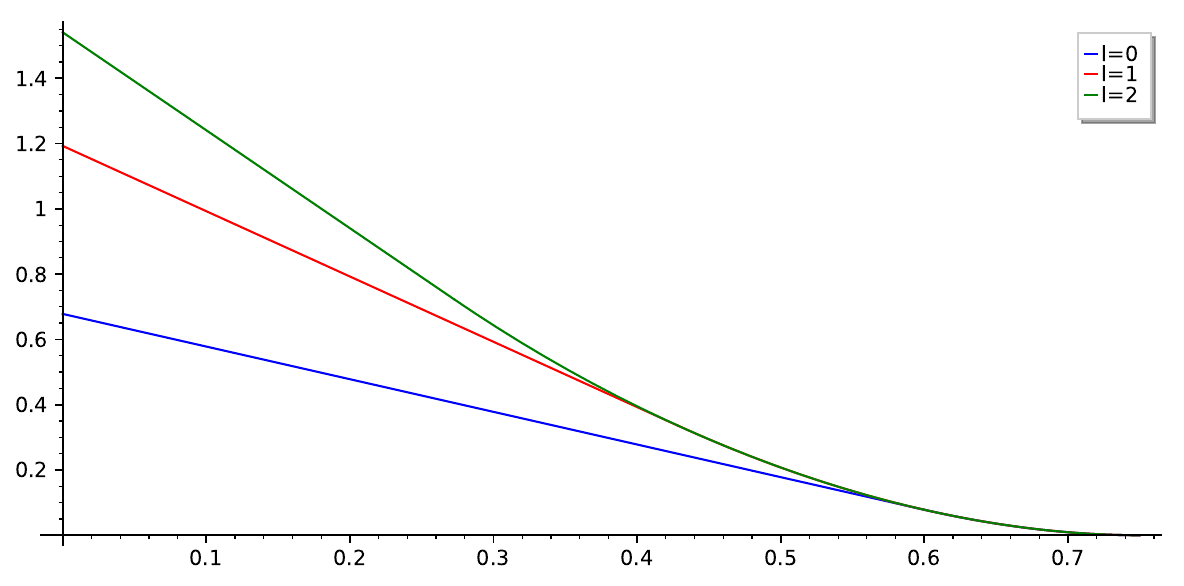}
\caption{The error exponent $T_{\mathrm{ld}}(\l,\ep)$ for $0<R<1$ where $\l=0,1,2$ and $q=2, \ep=0.25$.}
\end{center}
\end{figure}

It turns out that the error exponents obtained here under these decoding principles are identical with those for the random $[n,nR]_q$ code ensemble obtained in \cite[Theorem 3]{FFW} and \cite[Theorems 1.3 and 1.4]{Xio}. 

We establish a strong concentration result for the unsuccessful decoding probability of a random code in the ensemble $\R_{(1-R)n,n}$ towards the mean under unambiguous decoding.
\begin{theorem}\label{Concen} Let the rate $0 < R < 1$ be fixed and $n \to \infty$. Then as $H_n$ runs over the ensemble $\R_{(1-R)n,n}$, we have
\begin{eqnarray} \label{1:conc} \frac{P_\mathrm{ud}(H_n,\ep)}{P_\mathrm{ud}(\R_{(1-R)n,n},\ep)} \to 1 \quad \emph{\textbf{WHP}}, \end{eqnarray}
under either of the following conditions:
\begin{itemize}
\item[(1).] if $\frac{1-\ep}{1+(q-1)\ep^2} \leq R<1-\ep$ for any $q \ge 2$, or
\item[(2).] if $\frac{1-\ep}{1+(q-1)\ep} \le R<1-\ep$ for $q=2,3,4$.
\end{itemize}
\end{theorem}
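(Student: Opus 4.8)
The plan is to treat $X := P_{\mathrm{ud}}(H_n,\ep)$ as a random variable over $H_n \in \R_{(1-R)n,n}$ and prove \eqref{1:conc} by the second moment method. Write $m=(1-R)n$ and $w(E):=\ep^{|E|}(1-\ep)^{n-|E|}$. Since unambiguous decoding fails exactly when the erased coordinate set $E$ indexes linearly dependent columns of $H_n$, we have $X=\sum_{E\subseteq[n]}w(E)\,\mathbbm{1}[\text{columns }(H_n)_E\text{ are dependent}]$, and because the columns of a uniform $H_n$ are i.i.d. uniform in $\F_q^m$, $\E[X]=\sum_{i}(1-\psi_m(i))\binom{n}{i}\ep^i(1-\ep)^{n-i}=P_{\mathrm{ud}}(\R_{m,n},\ep)$ as in \eqref{Pud}. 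By Chebyshev's inequality $\P(|X/\E[X]-1|>\delta)\le \Var(X)/(\delta^2\E[X]^2)$, so \eqref{1:conc} follows once $\Var(X)/\E[X]^2\to 0$; equivalently, writing $\E[X]=q^{-nT_{\mathrm{ud}}(\ep)+o(n)}$ from Theorem \ref{Errexp}, it suffices to show the error exponent $T_{\mathrm{var}}$ of $\Var(X)$ strictly exceeds $2T_{\mathrm{ud}}(\ep)$.

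The key input is a closed form for $\Var(X)$. For $|E|=i$, $|F|=j$, $|E\cap F|=s$ I would first compute the joint probability that both column families are independent by conditioning on the columns indexed by $E\cap F$: these are independent with probability $\psi_m(s)$, and given this the columns of $E\setminus F$ and of $F\setminus E$ are fresh, mutually independent, and each family must extend $\mathrm{span}((H_n)_{E\cap F})$, with probabilities $\psi_m(i)/\psi_m(s)$ and $\psi_m(j)/\psi_m(s)$ respectively. Hence $\P[\mathrm{indep}(E)\cap\mathrm{indep}(F)]=\psi_m(s)\cdot\frac{\psi_m(i)}{\psi_m(s)}\cdot\frac{\psi_m(j)}{\psi_m(s)}=\frac{\psi_m(i)\psi_m(j)}{\psi_m(s)}$, and since the dependence indicators are complements of the independence indicators, $\Cov(\mathbbm{1}_{\mathrm{dep}(E)},\mathbbm{1}_{\mathrm{dep}(F)})=\psi_m(i)\psi_m(j)\,(1-\psi_m(s))/\psi_m(s)$. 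Grouping pairs $(E,F)$ by the multinomial partition of $[n]$ into $E\cap F$, $E\setminus F$, $F\setminus E$ and the complement yields
\[
\Var(X)=\sum_{i,j,s}\binom{n}{s,\,i-s,\,j-s,\,n-i-j+s}\ep^{i+j}(1-\ep)^{2n-i-j}\,\psi_m(i)\psi_m(j)\,\frac{1-\psi_m(s)}{\psi_m(s)}.
\]

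Next I would extract $T_{\mathrm{var}}$. Since $0<\prod_{k\ge1}(1-q^{-k})\le\psi_m(s)\le1$ and $1-\psi_m(s)\le q^{s-m}/(q-1)$, bounding $\psi_m(i)\psi_m(j)\le1$ makes the remaining weight factorise over the four coordinate classes to $(q\ep^2+2\ep(1-\ep)+(1-\ep)^2)^n=(1+(q-1)\ep^2)^n$, giving $\Var(X)\le C_q\,q^{-m}(1+(q-1)\ep^2)^n$ and hence $T_{\mathrm{var}}\ge T':=(1-R)-\log_q(1+(q-1)\ep^2)$. A Laplace/large-deviation analysis with $i=an$, $j=bn$, $s=cn$ shows this bound is tight only when the maximising proportions are admissible: because $\psi_m(i)\psi_m(j)=0$ for $i>m$ or $j>m$, the optimisation carries the constraints $a,b\le 1-R$, and the unconstrained optimiser $a^\ast=\frac{\ep(1+(q-1)\ep)}{1+(q-1)\ep^2}$ is feasible precisely when $R\le R_0:=\frac{1-\ep}{1+(q-1)\ep^2}$, where $T_{\mathrm{var}}=T'$. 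For $R>R_0$ the constraints bind, forcing $a=b=1-R$, and the constrained optimum gives a strictly larger exponent $T_{\mathrm{var}}>T'$; indeed $T'$ may be negative in this range, so the sharp evaluation is essential and the crude factorised bound alone will not suffice.

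The final step, which I expect to be the main obstacle, is the comparison $T_{\mathrm{var}}>2T_{\mathrm{ud}}(\ep)$, where $T_{\mathrm{ud}}(\ep)$ is the middle-regime value of \eqref{Errud} throughout the ranges in question. For $R\ge R_0$ (condition (1)) I would use the sharp constrained exponent and verify $T_{\mathrm{var}}>2T_{\mathrm{ud}}(\ep)$ for every $q\ge2$; in the range $R_1:=\frac{1-\ep}{1+(q-1)\ep}\le R<R_0$ the exponent is still $T'$, so concentration reduces to the one-variable inequality $T'>2T_{\mathrm{ud}}(\ep)$, whose value at the endpoint $R_1$ is governed by the sign of $2\log_q(1+(q-1)\ep)-\log_q(1+(q-1)\ep^2)-\frac{q\ep}{1+(q-1)\ep}$. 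This quantity is positive for $q=2,3,4$ but can turn negative for large $q$ and small $\ep$, which is exactly why condition (2) is restricted to $q\in\{2,3,4\}$. Carrying out these two inequality verifications in closed form, and confirming the thresholds $R_0$ and $R_1$ match the stated ranges, is the technical heart of the argument; once $T_{\mathrm{var}}>2T_{\mathrm{ud}}(\ep)$ is established, Chebyshev's inequality delivers \eqref{1:conc}.
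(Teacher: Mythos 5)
Your strategy coincides with the paper's own proof in every structural respect: a second-moment/Chebyshev argument, the covariance identity $\Cov\bigl(\mathbbm{1},\mathbbm{1}\bigr)=\psi_m(i)\psi_m(j)\bigl(1/\psi_m(s)-1\bigr)$ obtained by conditioning on the columns indexed by $E\cap F$ (the paper's Lemmas \ref{rank3} and \ref{Cov}), the same variance exponent with two regimes separated at $R_0=\frac{1-\ep}{1+(q-1)\ep^2}$ and the constrained optimum governed by $\kappa_0$ (the paper's Theorem \ref{Errexpvar}), and even the same endpoint quantity $Q(\ep)=2\log_q(1+(q-1)\ep)-\log_q(1+(q-1)\ep^2)-\frac{q\ep}{1+(q-1)\ep}$ that explains the restriction to $q\in\{2,3,4\}$. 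However, the two verifications you defer as ``the technical heart'' are precisely the substance of the theorem, and as written both are genuine gaps rather than routine checks.

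First, for condition (1) you assert that the sharp constrained exponent exceeds $2T_{\mathrm{ud}}(\ep)$ for all $q\ge 2$ but give no argument. The paper's verification hinges on the elementary but essential inequality $(1-R)^2<\kappa_0<1-R$, which, since $2R-1+(1-R)^2=R^2$, turns the difference into
\begin{align*}
c(\ep,R)=S_{\mathrm{ud}}(\ep)-2T_{\mathrm{ud}}(\ep)=(1-R)\log_q\left(\frac{\kappa_0}{(1-R)^2}\right)+R\log_q\left(\frac{2R-1+\kappa_0}{R^2}\right)>0;
\end{align*}
without this (or a substitute) the claim is unsupported. Second, and more seriously, for condition (2) you reduce the problem to the sign of $Q(\ep)$ at the single endpoint $R_1=\frac{1-\ep}{1+(q-1)\ep}$. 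But the inequality $c(\ep,R)>0$ must hold for \emph{every} $R\in[R_1,R_0)$, and positivity at $R_1$ alone does not imply it: you need to control where the minimum over the interval is attained. The paper supplies exactly this missing step (Lemma \ref{5:lem3}): on the relevant interval $\frac{\partial^2 c}{\partial R^2}=-\frac{2}{R(1-R)\ln q}<0$, so $c(\ep,\cdot)$ is concave, its minimum sits at an endpoint, positivity at $R_0$ follows from condition (1) by continuity, and positivity at $R_1$ is the statement $Q(\ep)>0$ --- which itself needs the calculus argument ($Q(0)=Q(1)=0$, and $Q'$ having exactly one root in $(0,1)$ when $q\in\{2,3,4\}$) rather than mere assertion. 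With the concavity lemma and these two closed-form verifications inserted, your outline becomes the paper's proof; without them it does not yet establish the stated ranges.
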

Here the notion \textbf{WHP} in (\ref{1:conc}) refers to ``with high probability'', that is, for any $\delta>0$, there is $c>0$ and $n_0>0$ such that
\[\P \left(\left|\frac{P_\mathrm{ud}(H_n,\ep)}{P_\mathrm{ud}(\R_{(1-R)n,n},\ep)}-1\right|<\delta\right)
>1-q^{-nc} \quad \forall n >n_0.\]
Noting that in the range $0<R<1-\ep$, it was known that $T_\mathrm{ud}(\ep) > 0$ (see Theorem \ref{Errexp}), hence
$$P_\mathrm{ud}(\R_{(1-R)n,n},\ep)=q^{-n(T_\mathrm{ud}(\ep)+o(1))} \to 0, \quad \mbox{ as } n \to \infty, $$
so (\ref{1:conc}) shows that $P_\mathrm{ud}(H_n,\ep)$ also tends to zero exponentially fast with high probability for the ensemble $\R_{(1-R)n,n}$ under either Condition (1) or (2) of Theorem \ref{Concen}.

The paper is now organized as follows. In Section \ref{Pre}, we introduce the three decoding principles and the Gaussian $q$-binomial coefficient in more details. Then in Section \ref{countrank}, we provide three counting results regarding $m \times n$ matrices of certain rank over $\F_q$. Afterwards in Sections \ref{Pfexp}, \ref{Pferr} and \ref{Pfcon}, we give the proofs of Theorems \ref{Exp}, \ref{Errexp} and \ref{Concen} respectively. The proofs of Theorem \ref{Concen} involves some technical calculus computations on the error exponent of the variance. In order to streamline the proofs, we put some of the arguments in Section \ref{Appen1} \textbf{Appendix}. Finally we conclude this paper in Section \ref{Conclusion}.

\section{Preliminaries}\label{Pre}
\subsection{Three decoding principles, the average decoding error probability, and the error exponent}
The three decoding principles have been well studied in the literature. For the sake of readers, we explain these terms here, following the presentation outlined in \cite{FFW}.

In a $q$-ary erasure channel, the channel input alphabet is a finite field $\F_q$ of order $q$, and during transmission, each symbol $x \in \F_q$ is either received correctly with probability $1-\ep$ or erased with probability $\ep$ $(0<\ep<1)$.

Let $C$ be a code of length $n$ over $\F_q$. For a codeword ${\bf c}=(c_1,c_2,\ldots,c_n)\in C$ that was transmitted through the channel, suppose the word ${\bf r}=(r_1,r_2,\ldots,r_n)$ is received. Denote by $E$ the set of $i$'s such that $r_i=\square$, i.e., the $i$-th symbol was erased during the transmission. In this case, we say that an erasure set $E$ occurs. The probability $P(E)$ that this erasure set $E$ occurs for $\mathbf{c}$ is clearly $\ep^{\#E}(1-\ep)^{n-\#E}$. Here $\#E$ denote the cardinality of the set $E$.

Let $C(E,{\bf r})$ be the set of all codewords of $C$ that match the received word $\mathbf{r}$, that is,
\[C(E,\mathbf{r})=\{\mathbf{c} \in C: c_i=r_i \quad \forall i \in [n]\setminus E\}.\]
Here $[n]=\{1,2,\ldots,n\}$. The decoding problem is about how to choose one or more possible codewords in the set $C(E,\bf{r})$ when a word $\bf{r}$ is received. Now we consider three decoding principles:
\begin{itemize}
\item In {\bf unambiguous decoding}, the decoder outputs the only one codeword in $C(E,{\bf r})$ if $\#C(E,{\bf r})=1$ and declares ``failure'' otherwise. The unsuccessful decoding probability $P_{\mathrm{ud}}(C,\ep)$ of $C$ under unambiguous decoding is defined to be the probability that the decoder declares ``failure''. 
    
\item In {\bf list decoding}, the decoder with list size $q^{\l}$ outputs all the codewords in $C(E,\mathbf{r})$ if $\#C(E,\mathbf{r})\le q^{\l}$ and declares ``failure'' otherwise. The unsuccessful decoding probability $P_{\mathrm{ld}}(C,\l,\ep)$ of $C$ under list decoding with list size $q^{\l}$ is defined to be the probability that the decoder declares ``failure''.
    
\item In {\bf maximum likelihood decoding}, the decoder randomly choose a codeword in $C(E,\bf{r})$ uniformly and outputs this codeword. The decoding error probability $P_{\mathrm{mld}}(C,\ep)$ of $C$ under maximum likelihood decoding is defined to be the probability that the codeword outputted by the decoder is not the sent codeword. 
\end{itemize}
The computation of $P_{\mathrm{ud}}(C,\ep), P_{\mathrm{ld}}(C,\l,\ep)$ and $P_{\mathrm{mld}}(C,\ep)$ can be made much easier if $C$ is a linear code. 

Now assume that $C$ is an $[n,k]_q$ linear code, that is, $C$ is a $k$-dimensional subspace of $\F_q^n$. For any $E \subset [n]$, define
\[C(E):=\left\{\mathbf{c}=(c_1,\ldots,c_n) \in C: c_i=0 \,\, \forall i \in [n] \setminus E\right\}.\]
It is easy to see that if the set $C(E,\mathbf{r})$ is not empty, then the cardinality of $C(E,\mathbf{r})$ is the same as that of $C(E)$, which is a vector space over $\F_q$. Denote by $\{I_i^{(\l)}(C)\}_{i=1}^{n}$ the $q^{\l}$-incorrigible set distribution of $C$, and $\{I_i(C)\}_{i=1}^{n}$ the incorrigible set distribution of $C$, which are defined respectively as follows:
\begin{eqnarray} I_i^{(\ell)}(C)&=&\#\left\{E \subset [n]: \#E=i, \dim C(E) > \l\right\},\nonumber \\
\label{3:Ii0} I_i(C)&=&\#\left\{E \subset [n]: \#E=i, C(E) \ne \{0\}\right\}.\end{eqnarray}
We see that $\dim C(E) \le \min\{k, \# E\}$, so if $\l \ge \min\{k, \# E\}$, then $I_i^{(\ell)}(C)=\emptyset$. We also define
\begin{eqnarray} \label{pre:llambda} \lambda_i^{(\l)}(C)=\#\left\{E \subset [n]: \#E=i, \dim C(E) =\l\right\}.\end{eqnarray}
It is easy to see that $I_i^{(0)}(C)=I_i(C)$, $\lambda_i^{(\l)}(C)=0$ if $\l>\min\{i,k\}$ and
\begin{eqnarray} \label{3:lamsum} \sum_{\l=0}^{i}\lambda_i^{(\l)}(C)=\binom{n}{i}.  \end{eqnarray}
We also have the identity
\begin{eqnarray} \label{3:il}
I^{(\l)}_i(C)=\sum_{j=\l+1}^{i}\lambda_i^{(j)}(C), \quad \forall i, \l.
\end{eqnarray}
Recall from \cite{FFW} that the values $P_{\mathrm{ud}}(C,\ep), P_{\mathrm{ld}}(C,\ep)$ and $P_{\mathrm{ld}}(C,\ell,\ep)$ can all be expressed in terms of $I_i^{(\ell)}(C)$, $I_i(C)$ and $\lambda_i^{(\l)}(C)$ as follows:
\begin{eqnarray*}
P_{\mathrm{ud}}(C,\ep)&=&\sum_{i=1}^n I_i(C) \ep^i (1-\ep)^{n-i}, \\
P_{\mathrm{ld}}(C,\ell,\ep)&=&\sum_{i=1}^n I_i^{(\l)}(C) \ep^i (1-\ep)^{n-i}, \\
P_{\mathrm{mld}}(C,\ep)&=&\sum_{i=1}^n \sum_{\l=1}^i \lambda_i^{(\l)}(C) (1-q^{-\l})\ep^i (1-\ep)^{n-i}.
\end{eqnarray*}

For $H \in \R_{m,n}$, we write $P_\mathrm{ld}(H,\l,\ep):=P_\mathrm{ld}(C_H,\l,\ep)$ and $P_*(H,\ep):=P_*(C_H,\ep)$ for $* \in \{\mathrm{ud}, \mathrm{mld}\}$, where $C_H$ is the parity-check code defined by (\ref{PC}). The \emph{average decoding error probabilities} over the ensemble $\R_{m,n}$ are given by
$$P_\mathrm{ld}(\R_{m,n},\l,\ep):=\E[P_\mathrm{ld}(H,\l,\ep)],$$
and
$$P_*(\R_{m,n},\ep):=\E[P_*(H,\ep)], \quad * \in \{\mathrm{ud}, \mathrm{mld}\}.$$
Here the expectation $\E$ is taken over the ensemble  $\R_{m,n}$.

Let $0<R, \ep<1$ be fixed constants. The error exponent $T_{\mathrm{ud}}(\ep)$ for average unsuccessful decoding probability of the family of ensembles $\{\R_{(1-R)n,n}\}$ is defined as
\[T_{\mathrm{ud}}(\ep):=-\lim_{n \to \infty} \frac{1}{n} \log_q P_{\mathrm{ud}}(\R_{(1-R)n,n},\ep), \]
provided that the limit exists \cite{Gallager,FFW,Viterbi}. The error exponents of $\{\R_{(1-R)n,n}\}$ for the other two decoding principles are defined similarly.

\subsection{Gaussian binomial coefficients}
For integers $n \ge k \ge 0$, the Gaussian binomial coefficients $\Gaubin{n}{k}_q$ is defined as
\[\Gaubin{n}{k}_q:=\frac{(q)_n}{(q)_k (q)_{n-k}},\]
where $(q)_n:=\prod_{i=1}^n \left(1-q^i\right)$. By convention $(q)_0=1$, $\Gaubin{n}{0}_q=1$ for any $n \ge 0$ and $\Gaubin{n}{k}_q=0$ if $k<0$ or $k>n$. The function $\psi_m(i)$ defined in (\ref{1:phim}) can be written as
\[\psi_m(i)=q^{-mi+i(i-1)/2}(q)_i\Gaubin{m}{i}_q. \]
We may define $\psi_m(0)=1$ for $m \ge 0$ and $\psi_m(i)=0$ if $i<0$ or $i>m$. Next, recall the well-known combinatorial interpretation of $\Gaubin{n}{k}_q$:
\begin{lemma}[\cite{MacWilliams}] \label{2:lem1} The number of $k$-dimensional subspaces of an $n$-dimensional vector space over $\F_q$ is $\Gaubin{n}{k}_q$.
\end{lemma}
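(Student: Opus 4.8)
The plan is to prove the lemma by a standard double-counting argument: I will count the ordered linearly independent $k$-tuples of vectors in an $n$-dimensional space over $\F_q$ directly, and then relate that count to the number of $k$-dimensional subspaces by dividing out the redundancy coming from the many ordered bases of a single subspace. Throughout I fix the ambient space as $\F_q^n$, which loses no generality since any two $n$-dimensional $\F_q$-spaces are isomorphic.

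First I would count the number $N$ of ordered $k$-tuples $(v_1,\dots,v_k)$ of linearly independent vectors in $\F_q^n$, building such a tuple greedily. The vector $v_1$ may be any nonzero vector, giving $q^n-1$ choices; having chosen $v_1,\dots,v_{i-1}$, which span an $(i-1)$-dimensional subspace containing exactly $q^{i-1}$ elements, the next vector $v_i$ may be any vector outside that subspace, giving $q^n-q^{i-1}$ choices. Multiplying over $i$ yields $N=\prod_{i=0}^{k-1}(q^n-q^i)$.

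Next I would observe that each ordered independent $k$-tuple spans a unique $k$-dimensional subspace, while conversely a fixed $k$-dimensional subspace $W$ is spanned by exactly its own ordered bases. Running the identical greedy count inside $W\cong\F_q^k$ shows that $W$ has precisely $\prod_{i=0}^{k-1}(q^k-q^i)$ ordered bases, a number independent of the choice of $W$. Since the map sending a tuple to its span is surjective onto the set of $k$-dimensional subspaces with every fiber of this common size, the number of $k$-dimensional subspaces equals
\[
\frac{\prod_{i=0}^{k-1}(q^n-q^i)}{\prod_{i=0}^{k-1}(q^k-q^i)}.
\]

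Finally I would simplify this ratio to match $\Gaubin{n}{k}_q=\frac{(q)_n}{(q)_k(q)_{n-k}}$. Factoring $q^i$ from each term contributes a common power $q^{\binom{k}{2}}$ to numerator and denominator that cancels, leaving $\prod_{j=n-k+1}^n(q^j-1)$ over $\prod_{j=1}^k(q^j-1)$; inserting and canceling the factor $\prod_{j=1}^{n-k}(q^j-1)$ rewrites this as $\prod_{j=1}^n(q^j-1)$ divided by $\prod_{j=1}^k(q^j-1)\prod_{j=1}^{n-k}(q^j-1)$. Converting each factor via $\prod_{i=1}^m(q^i-1)=(-1)^m(q)_m$ then produces the claimed expression. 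The only genuine bookkeeping hazard is this last step: one must respect the paper's normalization $(q)_m=\prod_{i=1}^m(1-q^i)$ and check that the sign $(-1)^n$ arising in the numerator matches $(-1)^{n-k}(-1)^k=(-1)^n$ in the denominator so that the signs cancel; the combinatorial core of the argument is entirely elementary.
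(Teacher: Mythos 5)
Your proof is correct and complete. Note that the paper itself offers no proof of this lemma: it is quoted directly from \cite{MacWilliams} as a classical fact, so there is no internal argument to compare against. Your double-counting route is the standard textbook derivation: the greedy count $\prod_{i=0}^{k-1}(q^n-q^i)$ of ordered independent $k$-tuples, the observation that the span map is surjective onto the $k$-dimensional subspaces with all fibers of the common size $\prod_{i=0}^{k-1}(q^k-q^i)$, and the simplification of the ratio to $\Gaubin{n}{k}_q$. Your sign bookkeeping for the paper's normalization $(q)_m=\prod_{i=1}^m(1-q^i)$ is also right, since the factor $(-1)^n$ in the numerator cancels against $(-1)^k(-1)^{n-k}=(-1)^n$ in the denominator. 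It is worth observing that your greedy counting device is exactly the one the paper itself deploys later: in the proof of its Lemma 2 the number of $m\times j$ matrices over $\F_q$ with linearly independent columns is computed as $\prod_{k=0}^{j-1}(q^m-q^k)$ by the same column-by-column argument, and the proof of its Lemma 3 (on counting subspaces with prescribed kernel under a projection) uses the same ratio-of-ordered-bases technique you use to divide out the redundancy. So your argument is fully consistent in method with the surrounding text, and would serve as a self-contained replacement for the external citation.
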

The Gaussian binomial coefficient $\Gaubin{n}{k}_q$ satisfies the property
\[\Gaubin{n}{k}_q=\Gaubin{n}{n-k}_q, \qquad \forall \,\,\, n \ge k \ge 0,\]
and the identity
\begin{eqnarray} \label{2:gid}
\prod_{i=0}^{n-1} \left(1+q^ix\right)=\sum_{i=0}^n \Gaubin{n}{i}_q q^{i(i-1)/2}  x^i, \quad \forall \,\, n \ge 1.  \end{eqnarray}

\section{Three counting results for the ensemble $\R_{m,n}$}\label{countrank}

In this section we provide three counting results about matrices of certain rank in the ensemble $\R_{m,n}$. Such results may not be new, but since it is not easy to locate them in the literature, we prove them here. These results will be used repeatedly in the proofs later on.

For $H \in \R_{m,n}$, denote by $\rank(H)$ the rank of the matrix $H$ over $\F_q$.

\begin{lemma}\label{rank}
Let $H$ be a random matrix in the ensemble $\R_{m,n}$. Then for any integer $j$, we have
\begin{eqnarray} \label{2:rk1id} \P(\rank(H)=j)=q^{-m(n-j)}\psi_m(j)\Gaubin{n}{j}_q.\end{eqnarray}
\end{lemma}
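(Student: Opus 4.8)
The plan is to compute the probability directly from its definition. Since $\R_{m,n}$ carries the uniform distribution over all $q^{mn}$ matrices in $\F_q^{m \times n}$, it suffices to count the number $N(m,n,j)$ of $m \times n$ matrices over $\F_q$ of rank exactly $j$ and then set $\P(\rank(H)=j)=N(m,n,j)/q^{mn}$. Concretely, I would establish the combinatorial identity $N(m,n,j)=\Gaubin{n}{j}_q\prod_{i=0}^{j-1}(q^m-q^i)$ and afterwards simplify it into the stated closed form.

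First I would classify the rank-$j$ matrices according to their row space. Each row of $H$ lies in $\F_q^n$, so the row space is a subspace of $\F_q^n$, and $\rank(H)=j$ precisely when this row space has dimension $j$. By Lemma \ref{2:lem1} there are $\Gaubin{n}{j}_q$ choices for a $j$-dimensional row space $U \subseteq \F_q^n$. Fixing such a $U$, the matrices whose row space equals $U$ are in bijection with ordered $m$-tuples $(h_1,\dots,h_m)$ of vectors of $U \cong \F_q^j$ that span $U$. Writing each $h_i$ in coordinates with respect to a fixed basis of $U$ assembles an $m \times j$ matrix, and the spanning condition is equivalent to this matrix having full rank $j$; viewing it by its columns, the count is the number of ordered $j$-tuples of linearly independent vectors in $\F_q^m$, namely $\prod_{i=0}^{j-1}(q^m-q^i)$. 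Since this number does not depend on the chosen $U$, multiplying the two counts gives $N(m,n,j)$.

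Finally I would carry out the elementary simplification: factoring $q^m$ out of each factor yields
\[\prod_{i=0}^{j-1}(q^m-q^i)=q^{mj}\prod_{i=0}^{j-1}(1-q^{i-m})=q^{mj}\psi_m(j),\]
by the definition (\ref{1:phim}) of $\psi_m$. Dividing by $q^{mn}$ then produces $q^{-m(n-j)}\psi_m(j)\Gaubin{n}{j}_q$, as claimed. The formula automatically handles the boundary cases: it vanishes unless $0 \le j \le \min(m,n)$, since $\Gaubin{n}{j}_q=0$ for $j>n$ and $\psi_m(j)=0$ for $j>m$, in agreement with the conventions fixed in Section \ref{Pre}.

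The only delicate point, and the step I expect to require the most care, is the middle count: verifying that the matrices with a prescribed $j$-dimensional row space correspond exactly to full-rank $m \times j$ matrices (equivalently, to ordered linearly independent $j$-tuples in $\F_q^m$), so that the count is the same for every $U$ and the two factors multiply cleanly. Everything else is routine bookkeeping together with the one-line algebraic identity above.
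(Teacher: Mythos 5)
Your proposal is correct and is essentially the paper's own argument: the paper identifies $H$ with the linear map $\mathbf{x}\mapsto\mathbf{x}H$ and sums over its possible images $V\leq\F_q^n$, which is exactly your sum over row spaces $U$, and both proofs count the matrices with prescribed row space/image as full-rank $m\times j$ matrices, giving $\prod_{i=0}^{j-1}(q^m-q^i)=q^{mj}\psi_m(j)$ before multiplying by $\Gaubin{n}{j}_q$ and dividing by $q^{mn}$. The step you flag as delicate is handled in the paper in the same way, so no gap remains.
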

\begin{proof}
We may assume that $j$ satisfies $0 \le j \le n$, because if $j$ is not in the range, then both sides of Equation (\ref{2:rk1id}) are obviously zero.

Denote by $\mathrm{Hom}(m,n)$ the set of $\F_q$-linear transformations from $\F_q^m$ to $\F_q^n$. Writing vectors in $\F_q^m$ and $\F_q^n$ as row vectors, we see that the random matrix ensemble $\R_{m,n}$ can be identified with the set $\mathrm{Hom}(m,n)$ via the relation
\begin{equation} \label{2:iden1} H \,\,\, \leftrightarrow \,\,\, G : \begin{array}{c}\F_q^m \to \F_q^n\\
\mathbf{x} \mapsto \mathbf{x}H
\end{array}.\end{equation}

Since $\rank(H)=j$ if and only if $\dim (\mathrm{Im}G)=j$, and $\#\R_{m,n}=q^{mn}$, we have
\begin{align*}
\P(\rank(H)=j)&=q^{-mn}\sum_{\substack{G \in \mathrm{Hom}(m,n) \\ \dim(\mathrm{Im}G)=j}} 1\\
&=q^{-mn}\sum_{\substack{V \leq \F_q^n \\ \dim V=j}}\sum_{\substack{G \in \mathrm{Hom}(m,n) \\ \mathrm{Im}G=V}} 1\, .
\end{align*}
The inner sum $\sum_{G \in \mathrm{Hom}(m,n),\mathrm{Im}G=V} 1$ counts the number of surjective linear transformations from $\F_q^m$ to $V$, a $j$-dimensional subspace of $\F_q^n$. Since $V \cong \F_q^j$, this is also the number of surjective linear transformations from $\F_q^m$ to $\F_q^j$, or, equivalently, the number of $m \times j$ matrices $K$ over $\F_q$ such that the columns of $K$ are linearly independent. The number of such matrices $K$ can be counted as follows: the first column of $K$ can be any nonzero vector over $\F_q$, there are $q^m-1$ choices; given the first column, the second column can be any vector lying outside the space of scalar multiples of the first column, so there are $q^m-q$ choices; inductively, given the first $k$ columns, the $(k+1)$-th column lies outside a $k$-dimensional subspace, so the number of choices for the $(k+1)$-th column is $q^m-q^k$. Thus we have
\begin{eqnarray} \label{2:iden2}
\sum_{\substack{G \in \mathrm{Hom}_q(m,n) \\ \mathrm{Im}G=V}} 1
=\prod_{k=0}^{j-1} (q^m-q^k)
=q^{mj}\psi_m(j), \quad \dim V=j.
\end{eqnarray}
Together with Lemma \ref{2:lem1}, we obtain
$$\P(\rank(H)=j)=q^{-mn}q^{mj}\psi_m(j)\sum_{\substack{V \leq \F_q^n \\ \dim V=j}} 1=q^{-m(n-j)}\psi_m(j)\Gaubin{n}{j}_q,$$
which is the desired result.
\end{proof}

\begin{lemma}\label{rank2}
Let $H$ be a random matrix in the ensemble $\R_{m,n}$. Let $A \subset [n]:=\{1,2,\cdots,n\}$ be a subset with cardinality $s$. Denote by $H_A$ the $m \times s$ submatrix formed by columns of $H$ indexed from $A$. Then for any integers $j$ and $r$, we have
\begin{eqnarray} \label{2:rk2id} \P(\rank(H)=j \cap \rank(H_A)=r)=q^{-m(n-j)+r(n-j-s+r)}\psi_m(j)\Gaubin{s}{r}_q\Gaubin{n-s}{j-r}_q.\end{eqnarray}
\end{lemma}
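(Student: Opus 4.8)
The plan is to extend the image-counting strategy used in the proof of Lemma \ref{rank}. Via the identification $H \leftrightarrow G$ with $G(\mathbf{x}) = \mathbf{x}H$, we have $\rank(H) = \dim(\mathrm{Im}\,G)$. Let $\pi_A : \F_q^n \to \F_q^s$ denote the projection onto the coordinates indexed by $A$. Since the $A$-coordinates of $\mathbf{x}H$ are exactly $\mathbf{x}H_A$, we have $\pi_A(G(\mathbf{x})) = \mathbf{x}H_A$, and hence $\rank(H_A) = \dim \pi_A(\mathrm{Im}\,G)$. The event $\{\rank(H)=j\} \cap \{\rank(H_A)=r\}$ therefore corresponds to those $G$ whose image $V := \mathrm{Im}\,G$ is a $j$-dimensional subspace of $\F_q^n$ with $\dim \pi_A(V) = r$.

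First I would reduce the probability to a subspace count. Partitioning over the possible images $V$ and using the fact (established within the proof of Lemma \ref{rank}, see (\ref{2:iden2})) that the number of $G \in \mathrm{Hom}(m,n)$ with $\mathrm{Im}\,G = V$ equals $q^{mj}\psi_m(j)$ for every $j$-dimensional $V$ --- a quantity depending only on $j$ --- gives
\[\P(\rank(H)=j \cap \rank(H_A)=r)=q^{-mn}q^{mj}\psi_m(j)\cdot N(n,s,j,r),\]
where $N(n,s,j,r)$ is the number of $j$-dimensional subspaces $V \le \F_q^n$ with $\dim \pi_A(V)=r$. It then suffices to prove $N(n,s,j,r)=q^{r(n-j-s+r)}\Gaubin{s}{r}_q\Gaubin{n-s}{j-r}_q$, after which the claimed formula follows by collecting the powers of $q$, noting $q^{-mn+mj}=q^{-m(n-j)}$.

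The heart of the argument is the evaluation of $N(n,s,j,r)$, which I expect to be the main obstacle. Writing $W := \ker \pi_A$ (the coordinate subspace on $[n]\setminus A$, of dimension $n-s$), rank-nullity applied to $\pi_A|_V$ shows $\dim(V \cap W)=j-r$. I would count such $V$ by first choosing the image $\bar V := \pi_A(V)$, an $r$-dimensional subspace of $\F_q^s$ ($\Gaubin{s}{r}_q$ choices by Lemma \ref{2:lem1}), and the kernel part $K := V \cap W$, a $(j-r)$-dimensional subspace of $W \cong \F_q^{n-s}$ ($\Gaubin{n-s}{j-r}_q$ choices). For fixed $\bar V$ and $K$, I would then show that the subspaces $V$ with $\pi_A(V)=\bar V$ and $V \cap W = K$ are in bijection with $(\F_q^{n-s}/K)^r$: fixing a basis $\bar v_1,\dots,\bar v_r$ of $\bar V$, each such $V$ assigns to each $\bar v_i$ the unique coset $b_i + K$ with $(\bar v_i, b_i) \in V$ (writing vectors in their $A$- and $A^c$-coordinates), and conversely any tuple of cosets reconstructs a valid $V$ via $V = \mathrm{span}\{(\bar v_i, b_i)\} \oplus (\{0\} \times K)$. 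This yields $\left(q^{(n-s)-(j-r)}\right)^r = q^{r(n-j-s+r)}$ such $V$, completing the count.

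Finally, the degenerate ranges are handled automatically: if $j$ or $r$ falls outside $0 \le r \le \min(s,j)$ or $0 \le j-r \le n-s$, then one of the Gaussian binomial coefficients (or $\psi_m(j)$) vanishes, matching the fact that the corresponding event is empty. I would remark that the only delicate point in the bijection is verifying that the span in the reconstruction meets $\{0\}\times K$ trivially --- which follows since the $\bar v_i$ are linearly independent --- so that $\dim V = j$ and $V \cap W = K$ hold exactly as required.
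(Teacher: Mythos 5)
Your proof is correct. It shares the paper's opening reduction --- identify $H$ with $G\in\mathrm{Hom}(m,n)$, factor out the count $q^{mj}\psi_m(j)$ of maps with prescribed $j$-dimensional image, and reduce everything to counting the $j$-dimensional subspaces $V\le \F_q^n$ with $\dim \pi_A(V)=r$ --- but your evaluation of that count takes a genuinely different route. The paper partitions such $V$ only by the kernel $W\times\{(0)_A\}$ of $\pi_A|_V$ and, for each fixed kernel, counts subspaces by double counting ordered bases: there are $q^{r(n-s)}\prod_{k=0}^{r-1}(q^s-q^k)$ ordered extensions of a fixed basis of the kernel, divided by the $\prod_{k=0}^{r-1}(q^j-q^{j-r+k})$ such bases that each admissible $V$ admits; the factor $\Gaubin{s}{r}_q$ emerges from this ratio by algebraic simplification. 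You instead partition by both the image $\bar V=\pi_A(V)$ and the kernel part $K=V\cap\ker\pi_A$, and identify the subspaces compatible with a given pair with $(\F_q^{n-s}/K)^r$ --- in effect observing that such $V$ are exactly graphs of linear maps $\bar V\to \F_q^{n-s}/K$. This buys transparency: each factor $\Gaubin{s}{r}_q$, $\Gaubin{n-s}{j-r}_q$ and $q^{r(n-j-s+r)}$ acquires a direct combinatorial meaning (choice of image, choice of kernel, number of coset tuples), and no overcounting/division step is needed; the delicate points you flag (uniqueness of the coset $b_i+K$, which uses $V\cap\ker\pi_A=\{0\}\times K$, and that the reconstructed span meets $\{0\}\times K$ trivially because the $\bar v_i$ are independent) are precisely the right checks, and your justifications go through. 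The paper's argument is somewhat more compact; both yield the identical count, and your dismissal of the degenerate ranges of $j,r$ via the vanishing conventions for $\psi_m$ and the Gaussian coefficients matches the paper's treatment.
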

\begin{proof}

We may assume that $0 \le j \le n$ and $\max\{0,s-n+j\} \le r \le \min\{j,s\}$, because if $j$ or $r$ does not satisfy this condition, then both sides of Equation (\ref{2:rk2id}) are zero.

Using the relations (\ref{2:iden1}) and (\ref{2:iden2}), we can expand the term $\P(\rank(H)=j \cap \rank(H_A)=r)$ as
\begin{equation}\label{prorank2}
\P(\rank(H)=j \cap \rank(H_A)=r)
=q^{-mn}\sum_{\substack{V \leq \F_q^n\\
\dim V=j \\
\dim V_A=r}} \sum_{\substack{G \in \mathrm{Hom}(m,n)\\ \mathrm{Im} G=V}} 1=q^{-m(n-j)}\psi_m(j)\sum_{\substack{V \leq \F_q^n\\ \dim V=j \\ \dim V_A=r}} 1.
\end{equation}
Here $V_A$ is the subspace of $\F_q^A$ formed by restricting $V$ to coordinates with indices from $A$. We may consider the projection given by
\begin{align*}
\pi_A: & V \to V_A\\
&(v_k)_{k=1}^n \mapsto (v_k)_{k \in A}\,.
\end{align*}
The kernel of $\pi_A$ has dimension $j-r$ and is of the form $W \times \{(0)_A\}$ for some subspace $W \le \F_q^{[n]-A}$. So we can further decompose the sum on the right hand side of (\ref{prorank2}) as
\begin{equation}\label{prorank3}
\sum_{\substack{V \leq \F_q^n\\
\dim V=j \\
\dim V_A=r}} 1=\sum_{\substack{W \leq \F_q^{[n]-A}\\
\dim W=j-r}} \sum_{\substack{V \leq \F_q^n \\
\dim V=j \\ \ker(\pi_A)=W \times \{(0)_A\} }} 1.
\end{equation}
Now we compute the inner sum on the right hand side of (\ref{prorank3}). Suppose we are given an ordered basis of the $(j-r)$-dimensional subspace $W \times \{(0)_A\}$ of $\F_q^n$. We extend it to an ordered basis of some $j$-dimensional subspace $V$ as follows: first we need $r$ other basis vectors $\mathbf{v}_1,\mathbf{v}_2,\cdots,\mathbf{v}_r$ to be linearly independent. At the same time, they have to be linearly independent with any nonzero vector in $\F_q^{[n]-A} \times \{(0)_A\}$ due to the kernel condition. This requires the set $\{\pi_A(\mathbf{v}_1), \pi_A(\mathbf{v}_2), \cdots, \pi_A(\mathbf{v}_r)\}$ to be linearly independent in $\F_q^A$. On the other hand, if this condition is satisfied, then the $r$ vectors $\mathbf{v}_1,\mathbf{v}_2,\cdots,\mathbf{v}_r$ are also linearly independent with one another as well as with any nonzero vector in $W \times \{(0)_A\}$. Therefore it reduces to counting the number of ordered linearly independent sets of $r$ vectors in $\F_q^A$. This number is clearly given by $\prod_{k=0}^{r-1}(q^s-q^k)$, so the total number of different ordered bases is given by $q^{r(n-s)}\prod_{k=0}^{r-1}(q^s-q^k)$.

On the other hand, given a fixed $j$-dimensional subspace $V$ with $\ker(\pi_A)=W \times \{(0)_A\}$, we count the number of ordered bases of $V$ of the form stated in previous paragraph as follows: we choose $\mathbf{v}_1$ to be any vector in $V$ but not in $W \times \{(0)_A\}$, which gives $q^j-q^{j-r}$ many choices for $\mathbf{v}_1$; similarly $\mathbf{v}_2$ is any vector in $V$ but not in the span of $W \times \{(0)_A\}$ and $\mathbf{v}_1$, this gives us $q^j-q^{j-r+1}$ many choices for $\mathbf{v}_2$; using this argument, we see that the number of such ordered bases is given by $\prod_{k=0}^{r-1} (q^j-q^{j-r+k})$.

We conclude from the above arguments that
\begin{align*}
\sum_{\substack{V \leq \F_q^n \\
\dim V=j \\
\ker(\pi_A)=W \times \{(0)_A\} }} 1&=\frac{q^{r(n-s)}\prod_{k=0}^{r-1}(q^s-q^k)}{\prod_{k=0}^{r-1} (q^j-q^{j-r+k})}\\
&=q^{r(n-j-s+r)}\Gaubin{s}{r}_q.
\end{align*}
Putting this into the right hand side of (\ref{prorank3}) and using Lemma \ref{2:lem1} again, we get
$$\sum_{\substack{V \leq \F_q^n\\ \dim V=j \\ \dim V_A=r}} 1=\sum_{\substack{W \leq \F_q^{[n]-A}\\ \dim W=j-r}} q^{r(n-j-s+r)}\Gaubin{s}{r}_q=q^{r(n-j-s+r)}\Gaubin{s}{r}_q\Gaubin{n-s}{j-r}_q.$$
The desired result is obtained immediately by plugging this into the right hand side of (\ref{prorank2}). This completes the proof of Lemma \ref{rank2}.
\end{proof}

\begin{lemma} \label{rank3} Let $H$ be a random matrix in the ensemble $\R_{m,n}$. Let $E,E'$ be subsets of $[n]$ such that
\[i=\#E, \quad
i'=\#E', \quad s=\#(E \cap E').\]
Then
\begin{eqnarray} \label{5:rank3}
\P(\rank(H_E)=i \cap \rank(H_{E'})=i')= \frac{\psi_m(i)\psi_m(i')}{\psi_m(s)}.
\end{eqnarray}
\end{lemma}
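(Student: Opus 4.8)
\emph{Proof proposal.} The plan is to exploit the fact that the $n$ columns of a uniform $H\in\R_{m,n}$ are independent, uniformly distributed vectors in $\F_q^m$, together with the probabilistic reading of the quantity $\psi_m(t)$: it is exactly the probability that $t$ such independent random vectors are linearly independent. Indeed, building a linearly independent family one vector at a time, the $(k+1)$-th vector must avoid the $k$-dimensional span of the previous ones, which happens with probability $(q^m-q^k)/q^m=1-q^{k-m}$; multiplying these for $k=0,\dots,t-1$ gives $\psi_m(t)$. This is just the probabilistic restatement of the count $\prod_{k=0}^{t-1}(q^m-q^k)=q^{mt}\psi_m(t)$ already used in the proof of Lemma \ref{rank}. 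Throughout I would assume $i,i'\le m$, since otherwise one of the two rank conditions is impossible and both sides of (\ref{5:rank3}) vanish.

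First I would partition the relevant columns into three disjoint blocks, indexed by $E\cap E'$ (of size $s$), by $E\setminus E'$ (of size $i-s$), and by $E'\setminus E$ (of size $i'-s$); call the corresponding independent families of random columns $C$, $D$, $D'$ respectively. Then the event $\{\rank(H_E)=i\}$ asserts that $C\cup D$ is a linearly independent family of $i$ vectors, and it depends only on the columns in $C$ and $D$; likewise $\{\rank(H_{E'})=i'\}$ asserts that $C\cup D'$ is linearly independent and depends only on the columns in $C$ and $D'$.

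The key step is to condition on the realization of the shared block $C$. If $C$ is linearly dependent, then both rank conditions fail and the conditional probability is $0$. If $C$ is any fixed linearly independent family (spanning a subspace of dimension exactly $s$), then one event becomes a statement about $D$ alone and the other a statement about $D'$ alone; since $D$ and $D'$ index disjoint columns they are independent, so the two events are conditionally independent given $C$. Moreover each conditional probability depends on $C$ only through $s$: the probability that the $i-s$ vectors of $D$ extend a fixed $s$-dimensional space to dimension $i$ is $\prod_{k=s}^{i-1}(1-q^{k-m})=\psi_m(i)/\psi_m(s)$, and symmetrically $D'$ contributes $\psi_m(i')/\psi_m(s)$.

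Finally I would assemble these pieces by the law of total probability. Summing over the linearly independent realizations of $C$, whose total probability equals $\P(C\text{ linearly independent})=\psi_m(s)$, gives
\[
\P\bigl(\rank(H_E)=i\cap\rank(H_{E'})=i'\bigr)=\psi_m(s)\cdot\frac{\psi_m(i)}{\psi_m(s)}\cdot\frac{\psi_m(i')}{\psi_m(s)}=\frac{\psi_m(i)\psi_m(i')}{\psi_m(s)},
\]
which is the claimed identity. The main obstacle is the conditional-independence step: one must argue carefully that conditioning on the shared columns $C$ genuinely decouples the two rank conditions. This works precisely because $E\setminus E'$ and $E'\setminus E$ index disjoint, hence independent, families of columns, and because the conditional extension probability depends on $C$ only through $\dim\mathrm{span}(C)=s$. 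Everything else reduces to the elementary counting of extensions of a fixed subspace by fresh random vectors.
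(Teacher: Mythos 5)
Your proof is correct, and while it rests on the same core idea as the paper's (decouple the two rank events by conditioning on the shared columns), the execution is genuinely different. The paper conditions on the \emph{event} $\{\rank(H_{E\cap E'})=s\}$, asserts conditional independence of the two rank events given it, and then evaluates the resulting conditional probabilities by Bayes' rule together with Lemmas \ref{rank} and \ref{rank2} (applied with $H=H_E$ and $A=E\cap E'$), so the lemma falls out as a corollary of the joint rank distribution already computed. You instead condition on the actual \emph{realization} of the shared block $C=H_{E\cap E'}$ and count from scratch: given a linearly independent realization of $C$, the probability that the fresh columns indexed by $E\setminus E'$ extend its span to dimension $i$ is $\prod_{k=s}^{i-1}(1-q^{k-m})=\psi_m(i)/\psi_m(s)$, symmetrically for $E'\setminus E$, and the law of total probability then yields $\psi_m(s)\cdot\frac{\psi_m(i)}{\psi_m(s)}\cdot\frac{\psi_m(i')}{\psi_m(s)}$, as claimed. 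This buys two things. First, your argument is self-contained: Lemma \ref{rank2} is never needed, only the elementary sequential count behind Lemma \ref{rank}. Second, and more importantly, it makes the conditional-independence step airtight: conditioned on the realization of $C$, the two events become functions of disjoint, hence independent, families of columns, and the conditional probabilities depend on $C$ only through $\dim\mathrm{span}(C)$. The paper's one-line justification---``columns of $H_E$ and $H_{E'}$ are independent as random vectors''---is loose as stated, since these two column families overlap exactly in $E\cap E'$; what makes the claim true is precisely the finer conditioning you carry out. What the paper's route buys in exchange is brevity: with Lemma \ref{rank2} already proven (and needed elsewhere), the identity follows in a few lines with no new counting.
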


\begin{proof}
It is clear that if a matrix $H$ has full rank, then so is the submatrix $H_A$ for any index subset $A$. Hence we have
\begin{align*}
&\P(\rank(H_E)=i \cap \rank(H_{E'})=i')\\
&=\P(\rank(H_E)=i \cap \rank(H_{E'})=i' | \rank(H_{E \cap E'})=s)\P(\rank(H_{E \cap E'})=s).
\end{align*}
It is easy to see that the two events $\rank(H_E)=i$ and $\rank(H_{E'})=i'$ are conditionally independent given $\rank(H_{E \cap E'})=s$, since columns of $H_E$ and $H_{E'}$ are independent as random vectors over $\F_q$. Hence we get
\begin{align*}
&\P(\rank(H_E)=i \cap \rank(H_{E'})=i')\\
&=\P(\rank(H_E)=i | \rank(H_{E \cap E'})=s)\P(\rank(H_E)=i' | \rank(H_{E \cap E'})=s)\P(\rank(H_{E \cap E'})=s)\\
&=\frac{\P(\rank(H_E)=i \cap \rank(H_{E \cap E'})=s)\P(\rank(H_E)=i' \cap  \rank(H_{E \cap E'})=s)}{\P(\rank(H_{E \cap E'})=s)}\\
&=\frac{\psi_m(i)\psi_m(i')}{\psi_m(s)}.
\end{align*}
Here we have applied Lemmas \ref{rank} and \ref{rank2} in the last equality with $j=i, j'=i'$ and $r=s, H=H_E$ and $A=E \cap E'$.
\end{proof}

\section{Proof of Theorem \ref{Exp}}\label{Pfexp}



Let $C$ be an $[n,k]_q$ linear code. Recall from Section \ref{Pre} that the values $P_{\mathrm{ud}}(C,\ep), P_{\mathrm{ld}}(C,\ep)$ and $P_{\mathrm{ld}}(C,\ell,\ep)$ can all be expressed explicitly as
\begin{eqnarray}
\label{3:pud}
P_{\mathrm{ud}}(C,\ep)&=&\sum_{i=1}^n I_i(C) \ep^i (1-\ep)^{n-i}, \\
\label{3:pld}
P_{\mathrm{ld}}(C,\ell,\ep)&=&\sum_{i=1}^n I_i^{(\l)}(C) \ep^i (1-\ep)^{n-i}, \\
\label{3:pmld}
P_{\mathrm{mld}}(C,\ep)&=&\sum_{i=1}^n \sum_{\l=1}^i \lambda_i^{(\l)}(C) (1-q^{-\l})\ep^i (1-\ep)^{n-i},
\end{eqnarray}
where $I_i(C)$ and $I_i^{(\ell)}(C)$ are the incorrigible set distribution of $C$ and the $q^{\l}$-incorrigible set distribution of $C$ respectively, and $\lambda_i^{(\l)}(C)$ is defined in (\ref{pre:llambda}) also in Section \ref{Pre}.

Now we can start the proof of Theorem \ref{Exp}. For $H \in \R_{m,n}$, we denote $$I_i:=I_i(C_H),\,I_i^{(\l)}:=I_i^{(\l)}(C_H),\,\lambda_i^{(\l)}:=\lambda_i^{(\l)}(C_H).$$


Taking expectations on both sides of Equations (\ref{3:pud})-(\ref{3:pmld}) over the ensemble $\R_{m,n}$, we obtain
\begin{eqnarray}
\label{3:pud2}
P_{\mathrm{ud}}(\R_{m,n},\ep)&=&\sum_{i=1}^n \E[I_i] \ep^i (1-\ep)^{n-i}, \\
\label{3:pld2}
P_{\mathrm{ld}}(\R_{m,n},\ell,\ep)&=&\sum_{i=1}^n \E[I_i^{(\l)}] \ep^i (1-\ep)^{n-i}, \\
\label{3:pmld2}
P_{\mathrm{mld}}(\R_{m,n},\ep)&=&\sum_{i=1}^n \sum_{\ell=1}^i \E[\lambda_i^{(\l)}] (1-q^{-\l})\ep^i (1-\ep)^{n-i}.
\end{eqnarray}
We now compute $\E[\lambda_i^{(\l)}]$. Noting that for $H \in \R_{m,n}$ and $E \subset [n]$, we have $\rank(H_E)+\dim C_H(E)=\#E$, thus
\begin{eqnarray*} \E[\lambda_i^{(\l)}]&=&\frac{1}{\#\R_{m,n}} \sum_{\substack{H \in \R_{m,n}}} \sum_{\substack{E \subset [n]\\
\#E=i}}  \mathbbm{1}_{\dim C_H(E)=\l}\\
&=&\frac{1}{\#\R_{m,n}} \sum_{\substack{E \subset [n]\\
\#E=i}} \sum_{\substack{H \in \R_{m,n}\\
\rank(H_E)=i-\l}} 1.
\end{eqnarray*}
By the symmetry of the ensemble $\R_{m,n}$, the inner sum on the right hand side depends only on the cardinality of $E$, so we may assume $E=[i]$ to obtain
\begin{eqnarray*} \E[\lambda_i^{(\l)}]=\frac{1}{\#\R_{m,i}} \binom{n}{i}\sum_{\substack{H \in \R_{m,i}\\
\rank(H)=i-\l}}1.
\end{eqnarray*}
The right hand side is exactly $\binom{n}{i}\P(\rank(H)=i-\l)$ where the probability is over the ensemble $\R_{m,i}$. So from Lemma \ref{rank} we have
\begin{eqnarray} \label{3:lamb} \E[\lambda_i^{(\l)}]=q^{-m\l} \psi_m(i-\l) \Gaubin{i}{i-\l}_q\binom{n}{i}.
\end{eqnarray}
Using this and (\ref{3:il}), we also obtain
\begin{eqnarray*} \label{3:Iil} \E[I^{(\l)}_i]=\sum_{j=\l+1}^i \E[\lambda_i^{(j)}]=\sum_{j=\l+1}^iq^{-mj} \psi_m(i-j) \Gaubin{i}{j}_q \binom{n}{i}.\end{eqnarray*}
Inserting the above values $\E[I_i^{(\l)}]$ and $\E[\lambda_i^{(\l)}]$ into (\ref{3:pld2}) and (\ref{3:pmld2}) respectively, we obtain explicit expressions of $P_{\mathrm{ld}}(\R_{m,n},\l,\ep)$ and $P_{\mathrm{mld}}(\R_{m,n},\ep)$, which agree with (\ref{Pld}) and (\ref{Pmld}) of Theorem \ref{Exp}.

As for $P_{\mathrm{ud}}(\R_{m,n},\ep)$, noting by (\ref{3:lamsum}) that
$\sum_{\l=0}^i \E[\lambda_i^{(\l)}]=\binom{n}{i}$ and by (\ref{3:lamb}) that $\E[\lambda_i^{(0)}]=\psi_m(i)\binom{n}{i}$, therefore
\begin{eqnarray} \label{3:eli} \E[I_i]=\sum_{j=1}^i \E[\lambda_i^{(j)}]=\binom{n}{i}-\E[\lambda_i^{(0)}]=\left(1-\psi_m(i)\right)\binom{n}{i}. \end{eqnarray}
Inserting this value into (\ref{3:pud2}), we also obtain the desired expression of $P_{\mathrm{ud}}(\R_{m,n},\ep)$. This completes the proof of Theorem \ref{Exp}.

\section{Proof of Theorem \ref{Errexp}}\label{Pferr}

First recall that the error exponents of the average decoding error probability of the ensemble $\R_{(1-R)n,n}$ over the erasure channel under the three decoding principles are defined by
\begin{eqnarray} \label{5:tld} T_{\mathrm{ld}}(\l,\ep):=-\lim_{n \to \infty}\frac{1}{n}\log_q P_{\mathrm{ld}}(\R_{(1-R)n,n},\l,\ep),\end{eqnarray}
and
$$T_*(\ep):=-\lim_{n \to \infty}\frac{1}{n}\log_q P_*(\R_{(1-R)n,n},\ep), \quad * \in \{\mathrm{ud}, \mathrm{mld}\},$$
provided that the limit exists \cite{Gallager,FFW,Viterbi}.

Unambiguous decoding corresponds to list decoding with $\l=0$, and it is also easy to see that \[\frac{1}{2}P_{\mathrm{ud}}(C,\ep) \le P_{\mathrm{mld}}(C,\ep) \le P_{\mathrm{ud}}(C,\ep), \]
hence we have
\[T_{\mathrm{ud}}(\ep)=T_{\mathrm{mld}}(\ep)=T_{\mathrm{ld}}(0,\ep),\]
if the limit exists say in (\ref{5:tld}) for $\l=0$. So we only need to prove Part 1) of Theorem \ref{Errexp} for the case of list decoding.

Write $m=(1-R)n$, and we define
$$f_{i,j}:=q^{-mj}\psi_m(i-j)\Gaubin{i}{j}_q\binom{n}{i}\ep^i(1-\ep)^{n-i}.$$
It is easy to see that $f_{i,j} \geq 0$ for all integers $i,j$. In addition, $f_{i,j} \neq 0$ if and only if $1 \leq i \leq n, 1 \leq j \leq i$ and $i-j \leq m$.

We can rewrite (\ref{Pld}) as
$$P_{\mathrm{ld}}(\R_{m,n},\l,\ep)=\sum_{i,j} f_{i,j},$$
where $i,j$ are integers satisfying the conditions
\begin{equation}\label{conditionld}
\l+1 \leq i \leq n, \quad \max\{i-m,\l+1\} \leq j \leq i.
\end{equation}
The number of such integer pairs $(i,j)$ is at most $mn$. Noting that $\lim_{n \to \infty} \frac{\log_q(mn)}{n}=0$, we have
\begin{equation}\label{erreq}
\frac{1}{n}\log_q P_{\mathrm{ld}}(\R_{m,n},\l,\ep)=o(1)+\frac{1}{n}\max_{i,j}\log_q f_{i,j}.
\end{equation}
Now we focus on the quantity $f_{i,j}$. First, set $T=q^{-1}$ so that $0 < T < 1$. It is easy to verify that for $0 \leq j \leq i$,
\begin{eqnarray} \label{4:qtot}\Gaubin{i}{j}_q=q^{j(i-j)}\Gaubin{i}{j}_T, \quad \psi_m(i)=\frac{(T)_m}{(T)_{m-i}}.\end{eqnarray}
Hence we have
$$f_{i,j}=q^{-j(m-i+j)}\frac{(T)_m}{(T)_{m-i+j}}\Gaubin{i}{j}_T\binom{n}{i}\ep^i(1-\ep)^{n-i}.$$
The infinite product $(T)_\infty:=\prod_{r=1}^\infty(1-T^r)$ converges absolutely to some positive real number $M$ which only depends on $q$, and $M=(T)_\infty < (T)_m \leq 1$ for any $m$. This implies that
$$M^2 \leq \frac{(T)_m}{(T)_{m-i+j}}\Gaubin{i}{j}_T =\frac{(T)_m(T)_i}{(T)_{m-i+j}(T)_j(T)_{i-j}} \leq \frac{1}{M^3}, $$
and thus
$$\frac{1}{n}\log_q\left(\frac{(T)_m}{(T)_{m-i+j}}\Gaubin{i}{j}_T\right)=o(1).$$
Therefore we have
$$\frac{1}{n}\log_q f_{i,j}=-\frac{j(m-i+j)}{n}+\frac{1}{n}\log_q \binom{n}{i}+\frac{i}{n}\log_q \ep+\frac{n-i}{n}\log_q(1-\ep)+o(1).$$
We want to maximize this quantity over $i,j$ satisfying (\ref{conditionld}). Since the term $-\frac{j(m-i+j)}{n}$ is always non-positive, it is easy to see that for any fixed $i$, to maximize the term $\frac{1}{n}\log_q f_{i,j}$, we shall take
\[j=\left\{
\begin{array}{lcl}
\l+1 &:& \mbox{ if } \l+1 > i-m, \mbox{ or equivalently if } i < m+\l+1, \\
i-m &:& \mbox{ otherwise}.
\end{array}\right.\]
So we can simplify (\ref{erreq}) as
\begin{align*}
&\quad\frac{1}{n} \log_q P_{\mathrm{ld}}(\R_{m,n},\l,\ep)\\
&=o(1)+\max_{\l+1 \leq i \leq n}\left[-\frac{(\l+1)(m-i+\l+1)_+}{n}+\frac{1}{n}\log_q \binom{n}{i}+\frac{i}{n}\log_q \ep+\frac{n-i}{n}\log_q(1-\ep)\right],
\end{align*}
where $(x)_+:=\max\{0,x\}$.

Let $i=tn$ with $0 < t \leq 1$. Using the result (see \cite{MacWilliams} and \cite[Lemma 3]{FFW} for example)
\begin{equation}\label{entropy}
\frac{1}{n}\log_q \binom{n}{tn}=h(t)+o(1),
\end{equation}
where $h(t)=-t\log_q t-(1-t)\log_q(1-t)$ is the binary entropy function (in $q$-its), and taking $n \to \infty$, we obtain
$$T_{\mathrm{ld}}(\l,\ep)=-\sup_{0 < t \leq 1} f(t),$$
where
\begin{equation}\label{f}
f(t)=-(\l+1)(1-R-t)_+ +h(t)+t\log_q\ep+(1-t)\log_q(1-\ep).
\end{equation}
Note that $f(t)$ is continuous on the interval $(0,1)$.

Differentiating (\ref{f}) with respect to $t$, we get
$$\frac{\d f}{\d t}=\begin{cases}
\l+1+\log_q\left(\frac{\ep}{t}\right)-\log_q\left(\frac{1-\ep}{1-t}\right) &(0 < t < 1-R)\\
\log_q\left(\frac{\ep}{t}\right)-\log_q\left(\frac{1-\ep}{1-t}\right) &(1-R < t < 1)
\end{cases}$$
It is easy to check from the right hand side that $\frac{\d f}{\d t}=0$ at \[t_1=\frac{q^{\l+1}\ep}{1+(q^{\l+1}-1)\ep}, \quad t_2=\ep, \]
and at these points the function $f(t)$ has a local maximum. There are three cases to consider:

\textbf{Case 1:} $0 < R \leq \frac{1-\ep}{1+(q^{\l+1}-1)\ep}$

In this case we have $\ep < t_1 \leq 1-R$, then $f(t)$ is maximized at $t=t_1$, so
$$T_{\mathrm{ld}}(\l,\ep)=-f(t_1)=(\l+1)(1-R)-\log_q[1+(q^{\l+1}-1)\ep].$$

\textbf{Case 2:} $\frac{1-\ep}{1+(q^{\l+1}-1)\ep} < R < 1-\ep$

In this case we have $\ep < 1-R < t_1$, then $f(t)$ is maximized at $t=1-R$, so
$$T_{\mathrm{ld}}(\l,\ep)=-f(1-R)=(1-R)\log_q\left(\frac{1-R}{\ep}\right)+R\log_q\left(\frac{R}{1-\ep}\right).$$

\textbf{Case 3:} $1-\ep \leq R < 1$

In this case we get $1-R \leq \ep < t_1$, then $f(t)$ is maximized at $t=t_2=\ep$, so
$$T_{\mathrm{ld}}(\l,\ep)=-f(\ep)=0.$$
Combining {\bf Cases 1-3} above gives Equation (\ref{Errld}). This completes the proof of Theorem \ref{Errexp}.

\section{Proof of Theorem \ref{Concen}}\label{Pfcon}
The proof of Theorem \ref{Concen} depend on the computation of the variance of the unsuccessful decoding probability under unambiguous decoding and its error exponent.

\subsection{The variance of unsuccessful decoding probability and its error exponent}
Note from (\ref{3:pud}) that the variance $\sigma_{\mathrm{ud}}^2(\R_{m,n},\ep)$ of the unsuccessful decoding probability under unambiguous decoding can be expressed as
\begin{align*}
\sigma_{\mathrm{ud}}^2(\R_{m,n},\ep)&=\E[P_\mathrm{ud}(H,\ep)^2]-\left(\E[P_\mathrm{ud}(H,\ep)]\right)^2\\
&=\E\left[\left(\sum_{i=1}^n I_i \ep^i(1-\ep)^{n-i}\right)^2\right]-\left(\sum_{i=1}^n \E[I_i]\ep^i(1-\ep)^{n-i}\right)^2\\
&=\sum_{i,i'=1}^n\Cov(I_i,I_{i'})\ep^{i+i'}(1-\ep)^{2n-i-i'},
\end{align*}
where the term $\Cov(I_i,I_{i'})$ is given by
\[\Cov(I_i,I_{i'})=\E[I_iI_{i'}]-\E[I_i]\E[I_{i'}]. \]
We first obtain:
\begin{lemma}\label{Cov}
For $1 \leq i,i' \leq n$, we have
\begin{eqnarray} \label{5:covv} \Cov(I_i,I_{i'})=\psi_m(i)\psi_m(i')\sum_{s=1}^{\min\{i,i'\}}
\left(\frac{1}{\psi_m(s)}-1\right)\binom{n}{s, i-s, i'-s, n-i-i'+s}.\end{eqnarray}
Here the multinomial coefficient $\binom{n}{a,b,c,d}$ for any non-negative integers $a,b,c,d,n$ such that $a+b+c+d=n$ is given by
\[\binom{n}{a,b,c,d}:=\frac{n!}{a!b!c!d!}\,. \]
\end{lemma}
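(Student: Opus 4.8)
The plan is to reduce everything to the rank statistics already computed in Lemmas \ref{rank}--\ref{rank3}. First I would rewrite $I_i$ as a sum of indicators over index subsets. Recalling the rank-nullity relation $\rank(H_E)+\dim C_H(E)=\#E$ noted in Section \ref{Pfexp}, the condition $C_H(E)\neq\{0\}$ is equivalent to $\rank(H_E)<\#E$, so for $\#E=i$ we have $\mathbbm{1}_{C_H(E)\neq\{0\}}=1-\mathbbm{1}_{\rank(H_E)=i}$ and hence $I_i=\sum_{\#E=i}(1-\mathbbm{1}_{\rank(H_E)=i})$.

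Next I would expand the covariance directly as a double sum over ordered pairs $(E,E')$ with $\#E=i$ and $\#E'=i'$:
\[
\Cov(I_i,I_{i'})=\sum_{\substack{\#E=i\\ \#E'=i'}}\Big[\P(\rank(H_E)<i\cap\rank(H_{E'})<i')-\P(\rank(H_E)<i)\,\P(\rank(H_{E'})<i')\Big].
\]
Using the complement indicators together with the specialization $j=i$ (full rank) of Lemma \ref{rank}, which gives $\P(\rank(H_E)=i)=\psi_m(i)$ for the $m\times i$ matrix $H_E$, each marginal is $\P(\rank(H_E)<i)=1-\psi_m(i)$. For the joint probability I would expand the product of the two complement indicators so that it becomes $1-\psi_m(i)-\psi_m(i')+\P(\rank(H_E)=i\cap\rank(H_{E'})=i')$, and then invoke Lemma \ref{rank3} with $s=\#(E\cap E')$ to evaluate the last term as $\psi_m(i)\psi_m(i')/\psi_m(s)$.

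Subtracting the product of the marginals, the bracketed summand collapses to $\psi_m(i)\psi_m(i')\big(1/\psi_m(s)-1\big)$, which depends on $(E,E')$ only through $s=\#(E\cap E')$. I would therefore group the double sum by the value of $s$: the number of ordered pairs $(E,E')$ with $\#E=i$, $\#E'=i'$ and $\#(E\cap E')=s$ is exactly $\binom{n}{s,\,i-s,\,i'-s,\,n-i-i'+s}$, obtained by partitioning $[n]$ into $E\cap E'$, $E\setminus E'$, $E'\setminus E$ and the complement $[n]\setminus(E\cup E')$. Since $\psi_m(0)=1$, the $s=0$ term vanishes, so the sum may start at $s=1$ and runs up to $\min\{i,i'\}$ (larger $s$ forces a negative multinomial entry, hence a zero coefficient), yielding precisely \eqref{5:covv}.

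This is essentially a bookkeeping argument once the three counting lemmas are in hand; the only genuinely substantive input is Lemma \ref{rank3}, which rests on the conditional independence of $\rank(H_E)$ and $\rank(H_{E'})$ given $\rank(H_{E\cap E'})$. The one point requiring care is the cancellation: one must expand both the joint probability and the product of marginals to the same four-term form, so that the $1$, $\psi_m(i)$ and $\psi_m(i')$ contributions cancel exactly and only the factor $1/\psi_m(s)-1$ survives. This same cancellation is what makes the $s=0$ term disappear, a fact I would flag explicitly since it is what permits the summation index to begin at $s=1$ as written in the statement.
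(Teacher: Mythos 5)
Your proposal is correct and takes essentially the same route as the paper's proof: both rewrite $I_i$ as a sum of full-rank indicators via the rank--nullity relation, apply Lemma \ref{rank} for the marginals and Lemma \ref{rank3} for the joint probability, and then group the double sum over $(E,E')$ by $s=\#(E\cap E')$ using the identical multinomial count, with the $s=0$ term vanishing because $\psi_m(0)=1$. The only cosmetic difference is that you expand the complement (rank-deficient) events by inclusion--exclusion, whereas the paper works with the full-rank indicators directly, using that $\Cov(1-X,1-Y)=\Cov(X,Y)$; the cancellation you flag is exactly the one occurring there.
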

\begin{proof}[Proof of Lemma \ref{Cov}]
Obviously if $i$ or $i'$ does not satisfy the relation $1 \le i, i' \le m$, then both sides of (\ref{5:covv}) are zero. So we may assume that $1 \le i, i' \le m$.

For any $H\in \R_{m,n}$, from (\ref{3:Ii0}) we see that
\begin{eqnarray*} I_i(C_H)
=\sum_{\substack{E \subset [n]\\
\#E=i}} \left( 1-\mathbbm{1}_{\dim C_H(E)=0}\right)
=\sum_{\substack{E \subset [n]\\
\#E=i}} \left( 1-\mathbbm{1}_{\rank (H_E)=i}\right)\,.\end{eqnarray*}
So we can expand the term $\Cov(I_i,I_{i'})$ as
\begin{align}
& \Cov(I_i,I_{i'})\nonumber\\
&=\E[I_iI_{i'}]-\E[I_i]\E[I_{i'}]\nonumber\\
&=\sum_{\substack{E \subset [n]\\ \# E=i}} \sum_{\substack{E' \subset [n]\\ \# E'=i'}} \E[\mathbbm{1}_{\rank (H_E)=i}\mathbbm{1}_{\rank (H_E')=i'}]-\E[\mathbbm{1}_{\rank (H_E)=i}]\E[\mathbbm{1}_{\rank (H_E')=i'}] \nonumber\\
&=\sum_{\substack{E \subset [n]\\ \# E=i}} \sum_{\substack{E' \subset [n]\\ \# E'=i'}}
\Big[\P(\rank(H_E) = i \cap \rank(H_{E'}) = i')-\P(\rank(H_E) = i)\P(\rank(H_{E'}) = i')\Big]. \label{Coveq}
\end{align}
Applying Lemmas \ref{rank} and \ref{rank3} to the terms in the inner sums on the right hand side of (\ref{Coveq}), we have
\begin{equation}\label{Coveq2}
\Cov(I_i,I_{i'})=\sum_{s}\sum_{\substack{E,E' \subset [n]\\ \# E=i, \#E'=i'\\
\#(E \cap E')=s}} \left(\frac{\psi_m(i)\psi_m(i')}{\psi_m(s)}-\psi_m(i)\psi_m(i')\right).
\end{equation}
Together with the combinatorial identity
\begin{equation}\label{Comb}
\sum_{\substack{E,E' \subset [n]\\ \# E=i,\# E'=i'\\ \#(E \cap E')=s}} 1=\binom{n}{i}\binom{i}{s}\binom{n-i}{i'-s}=\binom{n}{s, i-s, i'-s, n-i-i'+s},
\end{equation}
we obtain the desired result as described by Lemma \ref{Cov}.
\end{proof}

From Lemma \ref{Cov}, the variance $\sigma_{\mathrm{ud}}^2(\R_{m,n},\ep)$ can be obtained easily, which we summarize below:
\begin{theorem}\label{Varud}
\begin{eqnarray*}
 \sigma_{\mathrm{ud}}^2(\R_{m,n},\ep)
=
\sum_{i,i'=1}^m\sum_{s=1}^{\min\{i,i'\}}\psi_m(i)\psi_m(i')
\left(\frac{1}{\psi_m(s)}-1\right)\binom{n}{s, i-s, i'-s, n-i-i'+s}\ep^{i+i'}(1-\ep)^{2n-i-i'}.\end{eqnarray*}
\end{theorem}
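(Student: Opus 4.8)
The plan is to treat Theorem \ref{Varud} as an immediate corollary of Lemma \ref{Cov}, since the bulk of the combinatorial work has already been carried out there. Recall that just before the statement of Lemma \ref{Cov} we derived the decomposition
\begin{eqnarray*}
\sigma_{\mathrm{ud}}^2(\R_{m,n},\ep)=\sum_{i,i'=1}^n \Cov(I_i,I_{i'})\,\ep^{i+i'}(1-\ep)^{2n-i-i'}.
\end{eqnarray*}
First I would substitute the explicit formula for $\Cov(I_i,I_{i'})$ from Lemma \ref{Cov} directly into each summand. This yields
\begin{eqnarray*}
\sigma_{\mathrm{ud}}^2(\R_{m,n},\ep)=\sum_{i,i'=1}^n\sum_{s=1}^{\min\{i,i'\}}\psi_m(i)\psi_m(i')\left(\frac{1}{\psi_m(s)}-1\right)\binom{n}{s, i-s, i'-s, n-i-i'+s}\ep^{i+i'}(1-\ep)^{2n-i-i'},
\end{eqnarray*}
which already has the exact form of the claimed expression, differing only in the outer summation range.

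The single point that requires justification is the reduction of the outer summation limits from $n$ down to $m$. For this I would invoke the convention established in Section \ref{Pre}, namely that $\psi_m(i)=0$ whenever $i>m$. Since every term in the double sum carries the factor $\psi_m(i)\psi_m(i')$, any pair $(i,i')$ with $i>m$ or $i'>m$ contributes zero, so the sum over $1\le i,i'\le n$ collapses to the sum over $1\le i,i'\le m$ without changing its value. One should also note that the factor $1/\psi_m(s)$ is well defined throughout the retained range: whenever a term is nonzero we have $1\le s\le \min\{i,i'\}\le m$, and $\psi_m(s)>0$ for $0\le s\le m$ by its defining product in (\ref{1:phim}). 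After this truncation the expression matches the statement of Theorem \ref{Varud} verbatim.

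Because the heavy lifting, in particular the conditional-independence argument and the combinatorial identity (\ref{Comb}) for counting pairs $(E,E')$ with $\#(E\cap E')=s$, is entirely absorbed into Lemma \ref{Cov}, there is essentially no obstacle remaining here; the proof is a one-line substitution together with the bookkeeping remark on the vanishing of $\psi_m(i)$ for $i>m$. The only thing to double-check is that no spurious terms survive at the boundary, which is precisely the observation that $\psi_m$ vanishes outside $\{0,1,\dots,m\}$ and is strictly positive inside it.
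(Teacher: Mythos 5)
Your proposal is correct and takes essentially the same route as the paper: the paper also obtains Theorem \ref{Varud} by substituting the formula of Lemma \ref{Cov} into the decomposition $\sigma_{\mathrm{ud}}^2(\R_{m,n},\ep)=\sum_{i,i'=1}^n\Cov(I_i,I_{i'})\,\ep^{i+i'}(1-\ep)^{2n-i-i'}$ derived just before that lemma. Your bookkeeping remark that the outer sums collapse from $n$ to $m$ because $\psi_m(i)=0$ for $i>m$ (with $\psi_m(s)>0$ for $1\le s\le m$) is exactly the detail the paper leaves implicit, having already noted in the proof of Lemma \ref{Cov} that $\Cov(I_i,I_{i'})$ vanishes outside $1\le i,i'\le m$.
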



For $0<R<1$, the error exponent of the variance $\sigma_{\mathrm{ud}}^2(\R_{(1-R)n,n},\ep)$ is defined by
\begin{equation}\label{Se}
S_{\mathrm{ud}}(\ep):=-\lim_{n \to \infty} \frac{1}{n}\log_q \sigma_{\mathrm{ud}}^2(\R_{(1-R)n,n},\ep),
\end{equation}
if the limit exists. We obtain:
\begin{theorem}\label{Errexpvar}
$$S_\mathrm{ud}(\ep)=\begin{cases}1-R-\log_q[1+(q-1)\ep^2] &\left(0 < R \leq \frac{1-\ep}{1+(q-1)\ep^2}\right)\\
(1-R)\log_q\left(\frac{\kappa_0}{\ep^2}\right)+R\log_q\left(\frac{2R-1+\kappa_0}{(1-\ep)^2}\right) &\left(\frac{1-\ep}{1+(q-1)\ep^2} < R < 1\right)
\end{cases}.$$
where $\kappa_0:=\kappa_0(R)$ is given by
\begin{equation}\label{k0}
\kappa_0(R):=1-R-\frac{\sqrt{4R(1-R)(q-1)+1}-1}{2(q-1)}.
\end{equation}
\end{theorem}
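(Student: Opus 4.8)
The plan is to apply the same Laplace-type maximization strategy used in the proof of Theorem \ref{Errexp} to the explicit triple sum for $\sigma_{\mathrm{ud}}^2(\R_{(1-R)n,n},\ep)$ given in Theorem \ref{Varud}. Writing $m=(1-R)n$ and
\[
g_{i,i',s}:=\psi_m(i)\psi_m(i')\left(\frac{1}{\psi_m(s)}-1\right)\binom{n}{s,i-s,i'-s,n-i-i'+s}\ep^{i+i'}(1-\ep)^{2n-i-i'},
\]
the variance is $\sum_{i,i',s} g_{i,i',s}$ over the range $1\le s\le\min\{i,i'\}$ and $1\le i,i'\le m$. Since the number of triples $(i,i',s)$ is polynomial in $n$, the same argument giving (\ref{erreq}) shows that
\[
\frac{1}{n}\log_q\sigma_{\mathrm{ud}}^2(\R_{m,n},\ep)=o(1)+\frac{1}{n}\max_{i,i',s}\log_q g_{i,i',s},
\]
so that $S_\mathrm{ud}(\ep)=-\sup g(t,t',\sigma)$ for the continuous exponent obtained in the limit. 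First I would convert to the base $T=q^{-1}$ using (\ref{4:qtot}), so that $\psi_m(i)=(T)_m/(T)_{m-i}$; as in the proof of Theorem \ref{Errexp}, each such ratio contributes only a bounded factor and hence $o(1)$ to $\frac{1}{n}\log_q$, \emph{except} for the factor $\frac{1}{\psi_m(s)}-1$, which I must treat carefully because $\psi_m(s)\to 1$ when $s/n$ is bounded away from $1-R$ but $1-\psi_m(s)$ can be exponentially small. Writing $1-\psi_m(s)=1-\prod_{k=0}^{s-1}(1-q^{k-m})$, a short estimate shows $\frac{1}{n}\log_q\bigl(\tfrac{1}{\psi_m(s)}-1\bigr)=(\sigma-(1-R))+o(1)$ when $s=\sigma n$ with $\sigma<1-R$, while the factor is $\Theta(1)$ when $\sigma=1-R$. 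This is the one genuinely new feature compared with Theorem \ref{Errexp}.

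With the scalings $i=tn$, $i'=t'n$, $s=\sigma n$, the multinomial coefficient contributes the multinomial entropy
\[
\frac{1}{n}\log_q\binom{n}{s,i-s,i'-s,n-i-i'+s}=H(\sigma,t-\sigma,t'-\sigma,1-t-t'+\sigma)+o(1),
\]
where $H(a,b,c,d)=-a\log_q a-b\log_q b-c\log_q c-d\log_q d$, by the same estimate (\ref{entropy}). Collecting everything, the exponent to be maximized is
\[
g(t,t',\sigma)=\bigl(\sigma-(1-R)\bigr)+H(\sigma,t-\sigma,t'-\sigma,1-t-t'+\sigma)+(t+t')\log_q\ep+(2-t-t')\log_q(1-\ep),
\]
valid on the simplex $\{0\le\sigma\le\min\{t,t'\},\ t,t'\le 1-R,\ t+t'-\sigma\le 1\}$. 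I expect the supremum to occur at an interior critical point in the regime where the constraint $t,t'\le 1-R$ is inactive, and at the boundary $t=t'=1-R$ otherwise, mirroring the two cases in the statement.

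The main obstacle will be the optimization itself. Setting $\partial g/\partial t=\partial g/\partial t'=\partial g/\partial\sigma=0$, symmetry suggests seeking a critical point with $t=t'$. The equations $\partial g/\partial t=0$ give $\frac{(t-\sigma)(1-t-t'+\sigma)}{\cdots}$-type ratios equal to powers of $\ep/(1-\ep)$, and eliminating variables should reduce to a single quadratic in the combination $\kappa:=1-t-t'+\sigma$ (the "both-coordinates-uncorrupted" proportion). I anticipate that this quadratic is exactly $(q-1)\kappa^2-\text{(linear)}\,\kappa+\cdots=0$, whose relevant root is the quantity $\kappa_0(R)$ in (\ref{k0}); substituting back and simplifying the entropy terms should yield the stated closed form $(1-R)\log_q(\kappa_0/\ep^2)+R\log_q((2R-1+\kappa_0)/(1-\ep)^2)$. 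The threshold $R=\frac{1-\ep}{1+(q-1)\ep^2}$ is where this interior maximizer first pushes the constraint $t=1-R$ to become active; for smaller $R$ the maximizer sits on that boundary and a direct substitution of $t=t'=1-R$ into $g$ (optimizing only over $\sigma$, where one checks $\sigma$ wants to be as small as possible so the $\sigma-(1-R)$ term and the entropy balance to give $-\log_q[1+(q-1)\ep^2]$) produces the low-rate formula $1-R-\log_q[1+(q-1)\ep^2]$. Verifying that these two regimes match continuously at the threshold, and confirming that no competing boundary critical point (e.g.\ $\sigma=t$) gives a larger value, are the delicate bookkeeping steps I would defer to the Appendix.
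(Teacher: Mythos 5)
Your reduction of the variance exponent to a continuous optimization problem is correct and is in fact exactly the paper's route (its Lemma \ref{Errexpvarlem}): the polynomial count of triples $(i,i',s)$, the $\Theta(1)$ bounds on $\psi_m(i)\psi_m(i')$, the key estimate $\frac{1}{n}\log_q\bigl(\frac{1}{\psi_m(s)}-1\bigr)=\sigma-(1-R)+o(1)$, and the exponent function in (\ref{gt}) with constraint set (\ref{range}) all match. The genuine gap is in the optimization itself: you have attached the two closed forms to the wrong regimes, and the error is structural rather than cosmetic. The full stationarity system $\partial g/\partial t=\partial g/\partial t'=\partial g/\partial\sigma=0$ has a unique solution, and it is independent of $R$: one finds $t=t'=\frac{\ep(1+(q-1)\ep)}{1+(q-1)\ep^2}$, $\sigma=\frac{q\ep^2}{1+(q-1)\ep^2}$, with value $g=-(1-R)+\log_q[1+(q-1)\ep^2]$. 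Since $\kappa_0(R)$ in (\ref{k0}) depends on $R$, it cannot arise as a root of any equation coming from this interior system; what the interior critical point yields is the \emph{low-rate} formula $1-R-\log_q[1+(q-1)\ep^2]$, and it is feasible (i.e.\ $t\le 1-R$) precisely when $R\le\frac{1-\ep}{1+(q-1)\ep^2}$. Conversely, $\kappa_0$ arises only on the boundary $t=t'=1-R$: there the one-variable derivative is $\log_q\frac{q(1-R-\sigma)^2}{\sigma(2R-1+\sigma)}$ (the paper's (\ref{G2})), whose vanishing is the quadratic $(q-1)\sigma^2-[2q(1-R)+2R-1]\sigma+q(1-R)^2=0$ with relevant root $\sigma=\kappa_0$; note also that $\kappa_0$ is the optimal intersection proportion $s/n$ itself, not your ``both-coordinates-uncorrupted'' combination $1-t-t'+\sigma$, which at this maximizer equals $2R-1+\kappa_0$. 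This boundary maximum is the global one exactly when the interior point becomes infeasible, i.e.\ in the \emph{high-rate} regime.

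Executed as written, your plan therefore fails in both regimes. For $R>\frac{1-\ep}{1+(q-1)\ep^2}$ the unconstrained critical point lies outside the feasible set, so ``solve $\nabla g=0$'' returns nothing usable there (its value is the low-rate expression, which is wrong in that range). For $R\le\frac{1-\ep}{1+(q-1)\ep^2}$, your heuristic that on the boundary ``$\sigma$ wants to be as small as possible'' is false: the derivative above tends to $+\infty$ as $\sigma$ decreases to its lower limit $\max\{0,1-2R\}$, so the boundary optimum sits at the interior point $\sigma=\kappa_0$ and its value is the high-rate formula, not $1-R-\log_q[1+(q-1)\ep^2]$. Moreover, since $g$ is strictly concave on its convex domain and its unique global maximizer in the low-rate regime has $t=t'<1-R$, the boundary value is \emph{strictly} below the true supremum for every $R$ strictly below the threshold (the two agree only at $R=\frac{1-\ep}{1+(q-1)\ep^2}$), so your plan would report a strictly incorrect, too-large exponent in the low-rate regime. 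The repair is the paper's nested elimination: maximize over $t'$ first (critical point $t'=\ep(1-t)+\sigma$, replaced by the bound $t'=1-R$ when the critical point exceeds it), then over $t$, then over $\sigma$, checking feasibility at each stage; it is precisely this bookkeeping that produces the threshold $\frac{1-\ep}{1+(q-1)\ep^2}$ and assigns each closed form to its correct regime.
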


A plot of the function $S_{\mathrm{ud}}(\ep)$ for $q=2,\ep=0.25$ in the range $0 < R < 1$ is given by {\bf Fig. 4}.

\begin{figure}
\begin{center}
\includegraphics[angle=0,width=0.5 \textwidth,height=0.2 \textheight]{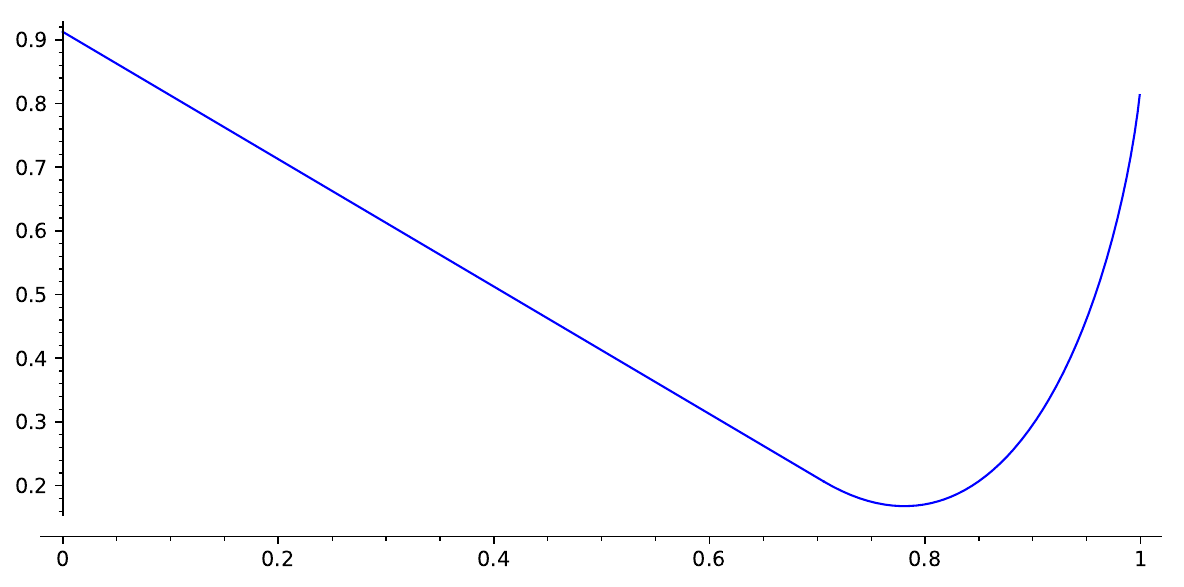}
\caption{The error exponent $S_{\mathrm{ud}}(\ep)$ for $0<R<1$, where $q=2, \ep=0.25$.}
\end{center}
\end{figure}

We remark that the proof of Theorem \ref{Errexpvar} follows a similar argument as that of Theorem \ref{Errexp}, though here the computation of $S_{\mathrm{ud}}(\ep)$ is much more complex as it involves a lot more technical manipulations. In order to streamline the idea of the proof in this section, we first assume Theorem \ref{Errexpvar} and leave its proof to Section \ref{Appen1} {\bf Appendix}. Then Theorem \ref{Concen} can be proved easily by using the standard Chebyshev's inequality.

\subsection{Proof of Theorem \ref{Concen}}

We have, by definition of error exponents,
$$P_{\mathrm{ud}}(\R_{(1-R)n,n},\ep)=q^{-n(T_\mathrm{ud}(\ep)+o(1))},$$
and
$$\sigma_{\mathrm{ud}}^2(\R_{(1-R)n,n},\ep)=q^{-n(S_\mathrm{ud}(\ep)+o(1))}.$$
These imply
$$\frac{\sigma_{\mathrm{ud}}^2(\R_{(1-R)n,n},\ep)}{P_{\mathrm{ud}}^2(\R_{(1-R)n,n},\ep)}=
\frac{q^{-n(S_\mathrm{ud}(\ep)+o(1))}}{q^{-2n(T_{\mathrm{ud}}(\ep)+o(1))}}
=q^{-n(S_\mathrm{ud}(\ep)-2T_{\mathrm{ud}}(\ep)+o(1))}.$$
The right hand side of the above will tend to 0 as $n \to \infty$ if
\begin{equation}\label{condition2}
c(\ep,R):=S_\mathrm{ud}(\ep)-2T_\mathrm{ud}(\ep) > 0.
\end{equation}
Thus under (\ref{condition2}), by Chebyshev's inequality, we have, for any fixed $\delta > 0$,
$$\P_{\R_{(1-R)n,n}}\left[\left|\frac{P_\mathrm{ud}(H,\ep)}{P_{\mathrm{ud}}(\R_{(1-R)n,n},\ep)}
-1\right| \geq \delta\right] \leq \frac{\sigma_{\mathrm{ud}}^2(\R_{(1-R)n,n},\ep)}{\delta^2P_{\mathrm{ud}}^2(\R_{(1-R)n,n},\ep)}
=q^{-n(c(\ep,R)+o(1))},$$
that is, $\frac{P_\mathrm{ud}(H,\ep)}{P_{\mathrm{ud}}(\R_{(1-R)n,n},\ep)} \to 1$ \textbf{WHP} as $n \to \infty$.

To prove Theorem \ref{Concen}, it remains to verify that (\ref{condition2}) holds true under the assumptions of either (1) or (2) of Theorem \ref{Concen}.

\textbf{Case 1.} $\frac{1-\ep}{1+(q-1)\ep^2} \le R < 1-\ep$

Since $\frac{1-\ep}{1+(q-1)\ep}<\frac{1-\ep}{1+(q-1)\ep^2}$, by a simple calculation, we have
\begin{align*}
c(\ep,R)&=(1-R)\log_q\left(\frac{\kappa_0}{\ep^2}\right)-R\log_q\left(\frac{2R-1+\kappa_0}{(1-\ep)^2}\right)
-2\left[(1-R)\log_q\left(\frac{1-R}{\ep}\right)+R\log_q\left(\frac{R}{1-\ep}\right)\right]\\
&=(1-R)\log_q\left(\frac{\kappa_0}{(1-R)^2}\right)+R\log_q\left(\frac{2R-1+\kappa_0}{R^2}\right)\\
&>(1-R)\log_q\left(\frac{(1-R)^2}{(1-R)^2}\right)+R\log_q\left(\frac{2R-1+(1-R)^2}{R^2}\right)=0,
\end{align*}
so (\ref{condition2}) always holds true in this case.

\textbf{Case 2.} $\frac{1-\ep}{1+(q-1)\ep} \le R < \frac{1-\ep}{1+(q-1)\ep^2}$

We can actually obtain a slightly more general result than (2) of Theorem \ref{Concen} as follows:
\begin{lemma} \label{5:lem3} If (\ref{condition2}) holds for $R=R_0$ where $R_0$ satisfies $0<R_0 \le \frac{1-\ep}{1+(q-1)\ep^2}$, then (\ref{condition2}) also holds true for any $R \in \left[R_0,\frac{1-\ep}{1+(q-1)\ep^2}\right]$.
\end{lemma}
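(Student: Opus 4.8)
The plan is to reduce the claim to a \emph{concavity} statement for the map $R \mapsto c(\ep,R)$ on the interval $\left[R_0,\,R_2\right]$, where I abbreviate $R_1:=\frac{1-\ep}{1+(q-1)\ep}$ and $R_2:=\frac{1-\ep}{1+(q-1)\ep^2}$ (and note $R_1 \le R_2$, since $\ep^2 \le \ep$). The point is that a concave function on a closed interval attains its minimum at one of the two endpoints. The hypothesis already gives $c(\ep,R_0)>0$ at the left endpoint, and the computation in \textbf{Case 1} above already gives $c(\ep,R_2)>0$ at the right endpoint; so once concavity on $[R_0,R_2]$ is in hand, the minimum is positive and hence $c(\ep,R)>0$ for every $R$ in between. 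Thus the entire lemma comes down to a concavity/monotonicity analysis of $c(\ep,\cdot)$.

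To carry this out I would first record that throughout $[R_0,R_2]$ we stay in the first branch of Theorem \ref{Errexpvar}, so that $S_\mathrm{ud}(\ep)=1-R-\log_q[1+(q-1)\ep^2]$ is affine in $R$; the nontrivial curvature will come entirely from $T_\mathrm{ud}(\ep)$, whose formula in Theorem \ref{Errexp} switches branch at $R=R_1$. I would therefore split $[R_0,R_2]$ at $R_1$ and treat the two pieces independently. On the lower piece $[R_0,R_1]$ (nonempty only when $R_0<R_1$) one has $T_\mathrm{ud}(\ep)=1-R-\log_q[1+(q-1)\ep]$, so that
\[c(\ep,R)=R-1-\log_q[1+(q-1)\ep^2]+2\log_q[1+(q-1)\ep],\]
which is affine in $R$ with slope $+1$, hence strictly increasing; this handles the lower piece at once (it stays above $c(\ep,R_0)>0$) and in particular gives $c(\ep,R_1)>0$. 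On the upper piece $[R_1,R_2]$ one has $T_\mathrm{ud}(\ep)=(1-R)\log_q\frac{1-R}{\ep}+R\log_q\frac{R}{1-\ep}$, so that $c(\ep,R)$ equals an affine function of $R$ plus $2h(R)$ with $h(R)=-R\log_q R-(1-R)\log_q(1-R)$; since $h''(R)=-\frac{1}{R(1-R)\ln q}<0$, the function $c(\ep,\cdot)$ is strictly concave there and its minimum over $[R_1,R_2]$ equals $\min\{c(\ep,R_1),c(\ep,R_2)\}$, which is positive by the previous step and by \textbf{Case 1}.

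I expect the junction $R=R_1$ to be the only delicate point, and the cleanest way to sidestep any $C^1$-matching worry is exactly the piecewise scheme above: prove strict monotonicity on the lower piece and strict concavity on the upper piece \emph{separately}, then glue the two conclusions using positivity at the three points $R_0$, $R_1$, $R_2$, rather than attempting to prove concavity of the single glued function. (As a consistency check one can compute that both one-sided derivatives of $c(\ep,\cdot)$ at $R_1$ equal $+1$, using $\frac{(1-\ep)(1-R_1)}{\ep R_1}=q$, so the glued function is in fact $C^1$ and globally concave; but this refinement is not needed.) Finally, the degenerate case $R_0 \ge R_1$ should be dispatched in one line: there $[R_0,R_2]$ lies entirely in the concave region $[R_1,R_2]$, and the endpoint argument applies directly with $c(\ep,R_0)>0$ and $c(\ep,R_2)>0$. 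In all cases this yields $c(\ep,R)>0$ on $\left[R_0,\frac{1-\ep}{1+(q-1)\ep^2}\right]$, which is the assertion of the lemma.
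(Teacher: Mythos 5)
Your proposal is correct and takes essentially the same route as the paper: the paper likewise splits $c(\ep,\cdot)$ at $R_1=\frac{1-\ep}{1+(q-1)\ep}$, observes that the lower branch is linear in $R$ with positive slope and that the upper branch satisfies $\frac{\partial^2 c(\ep,R)}{\partial R^2}=-\frac{2}{R(1-R)\ln q}<0$, and concludes from concavity together with positivity at the endpoints (the right endpoint $\frac{1-\ep}{1+(q-1)\ep^2}$ being handled by Case 1 and continuity, exactly as you do). Your only deviation is a refinement: where the paper tersely asserts that the glued function is concave on the whole interval, you sidestep the junction issue by arguing monotonicity on $[R_0,R_1]$ and concavity on $[R_1,R_2]$ separately (and your one-sided derivative check at $R_1$, both slopes equal to $+1$, is precisely what would be needed to justify the paper's global-concavity claim).
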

\begin{proof}[Proof of Lemma \ref{5:lem3}]
First, from \textbf{Case 1} and the continuity of $S_\mathrm{ud}(\ep)$ and $T_\mathrm{ud}(\ep)$, we know that (\ref{condition2}) holds for $R=\frac{1-\ep}{1+(q-1)\ep^2}$. Hence such $R_0$ always exists.

Now if $R \in \left[\frac{1-\ep}{1+(q-1)\ep},\frac{1-\ep}{1+(q-1)\ep^2}\right]$, then we have
$$c(\ep,R)=1-R-\log_q[1+(q-1)\ep^2]-2\left[(1-R)\log_q\left(\frac{1-R}{\ep}\right)+R\log_q\left(\frac{R}{1-\ep}\right)\right].$$
Differentiating $c(\ep,R)$ twice with respect to $R$, we have
$$\frac{\partial^2 c(\ep,R)}{\partial R^2}=-\frac{2}{R(1-R)\ln q},$$
which is negative for $R \in \left[\frac{1-\ep}{1+(q-1)\ep},\frac{1-\ep}{1+(q-1)\ep^2}\right]$, and thus $c(\ep,R)$ is convex $\cap$ for $R$ within that interval.

Next, for $0 < R < \frac{1-\ep}{1+(q-1)\ep}$, we note that
\begin{align*}
c(\ep,R)&=1-R-\log_q[1+(q-1)\ep^2]-2\{1-R-\log_q[1+(q-1)\ep]\}\\
&=R-1-\log_q[1+(q-1)\ep^2]+2\log_q[1+(q-1)\ep]
\end{align*}
is a linear function in $R$ with positive slope. Hence the function $c(\ep,R)$ is convex $\cap$ for $R$ within the whole interval $\left[R_0,\frac{1-\ep}{1+(q-1)\ep^2}\right]$, and Lemma \ref{5:lem3} follows.
\end{proof}
Now we let $q \in \{2,3,4\}$. Consider $R=\frac{1-\ep}{1+(q-1)\ep}$. Under this special value,
$$Q(\ep):=c(\ep,R)=-\frac{q\ep}{1+(q-1)\ep}-\log_q[1+(q-1)\ep^2]+2\log_q[1+(q-1)\ep].$$
Differentiating $Q(\ep)$ with respect to $\ep$, we have
\begin{align*}
Q'(\ep)&=-\frac{q}{[1+(q-1)\ep]^2}-\frac{2(q-1)\ep}{[1+(q-1)\ep^2]\ln q}+\frac{2(q-1)}{[1+(q-1)\ep]\ln q}\\
&=-\frac{(q-1)(q\ln q+2q-2)}{[1+(q-1)\ep]^2[1+(q-1)\ep^2]\ln q} \left(\ep^2-\frac{2(q-2)}{q\ln q+2q-2} \, \ep-\frac{2-\frac{q \ln q}{q-1}}{q\ln q+2q-2}\right).
\end{align*}
It is easy to check that when $q \in \{2,3,4\}$, then we have
\[0 \leq \frac{2(q-2)}{q\ln q+2q-2} < 1, \quad Q'(0)>0, Q'(1)<0.\]
Therefore exactly one of the two roots for $Q'(\ep)$ lies in the desired range $0 < \ep < 1$, and $Q(\ep)$ attains local maximum at that point. Since $Q(0)=Q(1)=0$, we conclude that $Q(\ep)$ is positive for $0 < \ep < 1$, and therefore (\ref{condition2}) holds for $R=\frac{1-\ep}{1+(q-1)\ep}$.

Now by Lemma \ref{5:lem3}, we see that Equation (\ref{condition2}) holds for $R \in \left[\frac{1-\ep}{1+(q-1)\ep},\frac{1-\ep}{1+(q-1)\ep^2}\right]$ when $q=2,3,4$. This completes the proof of Theorem \ref{Concen}.

\section{Appendix: Proof of Theorem \ref{Errexpvar}}\label{Appen1}

In order to prove Theorem \ref{Errexpvar}, we first need to obtain
\begin{lemma}\label{Errexpvarlem}
$$S_{\mathrm{ud}}(\ep)=-\sup_{t,t',\kappa} g(t,t',\kappa),$$
where
\begin{equation}\label{gt}
g(t,t',\kappa)=(\kappa-1+R)+h(\kappa,t-\kappa,t'-\kappa,1-t-t'+\kappa)+(t+t')\log_q\ep+(2-t-t')\log_q(1-\ep),
\end{equation}
and the supremum is taken over positive real numbers $t,t',\kappa$ satisfying
\begin{equation}\label{range}
t,t' \leq 1-R, \quad t+t'-1 \leq \kappa \leq \min\{t,t'\}.
\end{equation}
Here
\begin{equation}\label{multien}
h(t_1,t_2,\cdots,t_r)=-\sum_{j=1}^r t_j\log_q t_j\quad \left(t_j \geq 0, \sum_{j=1}^r t_j=1\right)
\end{equation}
is the multi-entropy function (in $q$-its).
\end{lemma}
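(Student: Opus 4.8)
The plan is to mimic the proof of Theorem~\ref{Errexp}: starting from the explicit formula in Theorem~\ref{Varud}, I would extract the exponential rate of $\sigma_{\mathrm{ud}}^2(\R_{(1-R)n,n},\ep)$ by isolating the dominant summand and converting the resulting discrete maximization into a continuous optimization. The first observation is that every term in the sum of Theorem~\ref{Varud} is nonnegative, since $\psi_m(s)\le 1$ forces $\tfrac{1}{\psi_m(s)}-1\ge 0$ and all remaining factors are manifestly nonnegative. As the number of admissible triples $(i,i',s)$ is at most $n^3$ and $\tfrac{1}{n}\log_q(n^3)=o(1)$, the exponential rate of the full sum coincides with that of its largest term, i.e.
\[
\frac{1}{n}\log_q \sigma_{\mathrm{ud}}^2(\R_{(1-R)n,n},\ep)
= o(1) + \frac{1}{n}\max_{i,i',s}\log_q\!\left[\psi_m(i)\psi_m(i')\Big(\tfrac{1}{\psi_m(s)}-1\Big)\binom{n}{s,i-s,i'-s,n-i-i'+s}\ep^{i+i'}(1-\ep)^{2n-i-i'}\right].
\]
Writing $m=(1-R)n$ and setting $i=tn,\ i'=t'n,\ s=\kappa n$, I would estimate the logarithm of the general term factor by factor.

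For the factors $\psi_m(i)$ and $\psi_m(i')$ I would invoke the identity $\psi_m(i)=(T)_m/(T)_{m-i}$ with $T=q^{-1}$ from (\ref{4:qtot}): since the infinite product $(T)_\infty$ converges to a positive constant $M$, both $(T)_m$ and $(T)_{m-i}$ lie in $[M,1]$, whence $\psi_m(i)\in[M,1]$ and $\tfrac{1}{n}\log_q\psi_m(i)=o(1)$ uniformly, and likewise for $\psi_m(i')$. The multinomial coefficient contributes the multi-entropy term: by the multinomial generalization of (\ref{entropy}), obtained by iterating it along the factorization $\binom{n}{s,i-s,i'-s,n-i-i'+s}=\binom{n}{s}\binom{n-s}{i-s}\binom{n-i}{i'-s}$, one gets $\tfrac{1}{n}\log_q\binom{n}{s,i-s,i'-s,n-i-i'+s}=h(\kappa,t-\kappa,t'-\kappa,1-t-t'+\kappa)+o(1)$ with $h$ as in (\ref{multien}). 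Finally the power factor gives directly $\tfrac{1}{n}\log_q\big[\ep^{i+i'}(1-\ep)^{2n-i-i'}\big]=(t+t')\log_q\ep+(2-t-t')\log_q(1-\ep)$.

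The heart of the matter, and the step I expect to be the main obstacle, is the exponential rate of $\tfrac{1}{\psi_m(s)}-1$, which must equal $\kappa-1+R$ so as to produce the leading term of (\ref{gt}). Since $\psi_m(s)\in[M,1]$ contributes nothing to the exponent, it suffices to analyze $1-\psi_m(s)$, for which I would establish the two-sided bound
\[
q^{\,s-1-m}\ \le\ 1-\psi_m(s)\ \le\ \frac{q^{\,s-m}}{q-1},
\]
valid for $1\le s\le m$: the upper bound follows from $1-\prod_{k=0}^{s-1}(1-q^{k-m})\le\sum_{k=0}^{s-1}q^{k-m}$, and the lower bound from $\psi_m(s)\le 1-q^{s-1-m}$. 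Taking $\tfrac1n\log_q$ then gives $\tfrac1n\log_q\big(\tfrac{1}{\psi_m(s)}-1\big)=\tfrac{s-m}{n}+o(1)=\kappa-1+R+o(1)$. Care is needed to keep this estimate uniform as $s$ approaches $m$, which is precisely where $1-\psi_m(s)$ ceases to be small and the crude bounds must still pin down the exponent.

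Assembling the four estimates shows that $\tfrac1n\log_q$ of the general term equals $g(t,t',\kappa)+o(1)$ with $g$ as in (\ref{gt}). It then remains to read off the constraint region: the summation bounds $1\le i,i'\le m$ translate to $0<t,t'\le 1-R$, while $1\le s\le\min\{i,i'\}$ together with the nonnegativity of the entries $i-s,\ i'-s,\ n-i-i'+s$ translate to $t+t'-1\le\kappa\le\min\{t,t'\}$, which is precisely (\ref{range}). Finally, using the continuity of $h$ and of the logarithmic terms, I would pass from the discrete maximum to the continuous supremum of $g$ over (\ref{range}), since every admissible $(t,t',\kappa)$ is approximable by lattice points $(i,i',s)/n$ and conversely, and then negate and take $n\to\infty$ in accordance with the definition (\ref{Se}) to conclude $S_{\mathrm{ud}}(\ep)=-\sup_{t,t',\kappa}g(t,t',\kappa)$.
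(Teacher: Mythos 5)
Your proposal is correct and follows essentially the same route as the paper's proof: isolate the dominant term among polynomially many nonnegative summands, show $\psi_m(i)\psi_m(i')$ contributes $o(1)$, extract the exponent $\frac{s-m}{n}$ from $\frac{1}{\psi_m(s)}-1$, and convert the multinomial coefficient via the multi-entropy formula. The only cosmetic difference is that you bound $1-\psi_m(s)$ directly by $q^{\,s-1-m}\le 1-\psi_m(s)\le \frac{q^{\,s-m}}{q-1}$, whereas the paper factors $\frac{1}{\psi_m(s)}-1=\frac{(T)_{m-s}}{(T)_m}\bigl(1-\prod_{j=m-s+1}^{m}(1-T^j)\bigr)$ and bounds the tail product; both yield the same exponent $\kappa-1+R$.
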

\begin{proof}[Proof of Lemma \ref{Errexpvarlem}]
Write $m=(1-R)n$. We define
$$g_{i,i',s}:=\psi_m(i)\psi_m(i')\left(\frac{1}{\psi_m(s)}-1\right)\binom{n}{s, i-s, i'-s, n-i-i'+s}\ep^{i+i'}(1-\ep)^{2n-i-i'}.$$
We note that $g_{i,i',s} \geq 0$ for all integers $i,i',s$, and is nonzero if and only if $1 \leq i,i' \leq m, 1 \leq s \leq \min\{i,i'\}$ and $i'-s \leq n-i$.
Then we can rewrite the term $\sigma_{\mathrm{ud}}^2(\R_{m,n},\ep)$ (see Theorem \ref{Varud}) as
$$\sigma_{\mathrm{ud}}^2(\R_{m,n},\ep)=\sum_{i,i',s} g_{i,i',s},$$
where summation is over all integers $i,i',s$ satisfying the conditions
\begin{equation}\label{conditionud}
1 \leq i,i' \leq m, \quad \max\{1,i+i'-n\} \leq s \leq \min\{i,i'\}.
\end{equation}
There are at most $m^3$ such integer triples $(i,i',s)$. Since $\lim_{n \to \infty} \frac{\log_q m^3}{n}=0$, we have
\begin{equation}\label{Varud2}
\frac{1}{n}\log_q\sigma_{\mathrm{ud}}^2(\R_{m,n},\ep)=o(1)+\frac{1}{n}\max_{i,i',s} \log_q g_{i,i',s}.
\end{equation}
Now we need a careful analysis of the quantity $g_{i,i',s}$. First, using $T:=q^{-1}$ and $M:=(T)_\infty \in (0,1)$, we observe that
$$M^2 \leq \psi_m(i)\psi_m(i')=
\frac{(T)_m^2}{(T)_{m-i}(T)_{m-i'}} \leq \frac{1}{M^2}, \quad \forall 1 \le i, i' \le m.$$
So
$$\frac{1}{n}\log_q[\psi_m(i)\psi_m(i')]=o(1).$$
Next, for $1 \leq s \leq m$, using (\ref{4:qtot}), we have,
\begin{align*}
\frac{1}{\psi_m(s)}-1&=\frac{(T)_{m-s}}{(T)_m}-1
=\frac{(T)_{m-s}}{(T)_m}\left(1-\prod_{j=m-s+1}^m(1-T^j)\right).
\end{align*}
It is easy to see that
\[1-T^{m-s+1} \ge \prod_{j=m-s+1}^m(1-T^j) > 1-\sum_{j=m-s+1}^{\infty}T^j=1-\frac{q}{q-1}T^{m-s+1}, \]
so we have
\[\frac{1}{n} \log_q \left(\frac{1}{\psi_m(s)}-1\right)=\frac{s-m-1}{n}+o(1).\]
Therefore we can simplify (\ref{Varud2}) as
\begin{align*}
&\quad\frac{1}{n}\log_q\sigma_{\mathrm{ud}}^2(\R_{m,n},\ep)\\
&=o(1)+\max_{i,i',s}\left[\frac{s-m-1}{n}+\frac{1}{n}\log_q\binom{n}{s, i-s, i'-s, n-i-i'+s}+\frac{i+i'}{n}\log_q \ep+\frac{2n-i-i'}{n}\log_q(1-\ep)\right].
\end{align*}
Letting $i=tn, j=t'n$ and $s=\kappa n$, using the following generalization of (\ref{entropy}) (which can be verified similarly)
$$\frac{1}{n}\log_q \binom{n}{t_1n,t_2n,\cdots,t_rn}=h(t_1,t_2,\cdots,t_r)+o(1),$$
where $h(t_1,t_2,\cdots,t_r)$ (with $\sum_{j=1}^r t_j=1$) is defined as in (\ref{multien}), and then taking $n \to \infty$, we obtain
$$S_{\mathrm{ud}}(\ep)=-\sup_{t,t',\kappa} g(t,t',\kappa),$$
where $g(t,t',\kappa)$ is defined as in (\ref{gt}) and the supremum is taken over positive real numbers $t,t',\kappa$ satisfying (\ref{range}). This completes the proof of Lemma \ref{Errexpvarlem}.
\end{proof}

Using Lemma \ref{Errexpvarlem}, now we provide a detailed proof of Theorem \ref{Errexpvar}.

We start from the function $g(t,t',\kappa)$ given in (\ref{gt}). It is best to first take the supremum over $t'$ while fixing $t$ and $\kappa$. Therefore we differentiate $g$ with respect to $t'$ while keeping $t$ and $\kappa$ fixed, where the range of $t'$ should be taken as $\kappa \leq t' \leq \min\{\kappa-t+1,1-R\}$:
$$\frac{\partial g}{\partial t'}=\log_q\left(\frac{(1-t-t'+\kappa)\ep}{(t'-\kappa)(1-\ep)}\right).$$

Solving for $\frac{\partial g}{\partial t'}=0$, we get
$t'=\ep(1-t)+\kappa$, and we check that $g$ attains local maximum at that point.

This leads us to consider two cases:

\textbf{Case 1.} $t \geq 1-\frac{1-R-\kappa}{\ep}$

In this case the critical point $\ep(1-t)+\kappa$ lies within the required range of $t'$. We see that the maximum of $g$ is
\begin{align*}
G_1(t,\kappa)&:=g(t,\ep(1-t)+\kappa,\kappa)\\
&=(\kappa-1+R)+h\left(\kappa,t-\kappa,\ep(1-t),(1-\ep)(1-t)\right)+[(\ep+\kappa)+(1-\ep)t]\log_q\ep\\
&\quad +[(2-\ep-\kappa)-(1-\ep)t]\log_q(1-\ep).
\end{align*}
Differentiating both sides with respect to $t$ while keeping $\kappa$ fixed, where the range of $t$ is taken as $\max\{\kappa,1-\frac{1-R-\kappa}{\ep}\} \leq t \leq 1-R$, we have
$$\frac{\partial G_1}{\partial t}=\log_q\left(\frac{\ep(1-t)}{t-\kappa}\right).$$
Solving for $\frac{\partial G_1}{\partial t}=0$, we get $t=\frac{\ep+\kappa}{1+\ep}$. We also see that $G_1$ attains local maximum at this point. We note that $\kappa \leq \frac{\ep+\kappa}{1+\ep}$ as $\kappa < 1$. In order for the range of $t$ to be nonempty, we also need $1-\frac{1-R-\kappa}{\ep} \leq 1-R$, which is true if and only if $0 < \kappa \leq 1-R-R\ep$ (and this range is nonempty if and only if $0 < R < \frac{1}{1+\ep}$). We then obtain
$$1-\frac{1-R-\kappa}{\ep} \leq \frac{\ep+\kappa}{1+\ep} \leq 1-R,$$
so the maximum of $g$ is
\begin{align*}
\mathcal{G}_1(\kappa)&:=G_1\left(\frac{\ep+\kappa}{1+\ep},\kappa\right)\\
&=(\kappa-1+R)+h\left(\kappa,\frac{\ep(1-\kappa)}{1+\ep},\frac{\ep(1-\kappa)}{1+\ep},\frac{(1-\ep)(1-\kappa)}{1+\ep}\right)+2\left(\frac{\ep+\kappa}{1+\ep}\right)\log_q\ep+2\left(\frac{1-\kappa}{1+\ep}\right)\log_q(1-\ep).
\end{align*}
Differentiating both sides with respect to $\kappa$ and checking for $\mathcal{G}_1'(\kappa)=0$, we have $\kappa=\frac{q\ep^2}{1+(q-1)\ep^2}$. In addition, $\mathcal{G}_1$ attains local maximum at this critical point.

Then we have the following two possible cases:
\begin{enumerate}
\item $0 < R \leq \frac{1-\ep}{1+(q-1)\ep^2}$

In this case $\frac{q\ep^2}{1+(q-1)\ep^2} \leq 1-R-R\ep$, and so the maximum is
$$\mathcal{G}_1\left(\frac{q\ep^2}{1+(q-1)\ep^2}\right)=-1+R+\log_q[1+(q-1)\ep^2].$$
\item $\frac{1-\ep}{1+(q-1)\ep^2} < R < \frac{1}{1+\ep}$

In this case $0 < 1-R-R\ep < \frac{q\ep^2}{1+(q-1)\ep^2}$, and so the maximum is
$$\mathcal{G}_1(1-R-R\ep)=-R\ep-(1-R-R\ep)\log_q\left(\frac{1-R-R\ep}{\ep^2}\right)-R(1+\ep)\log_q\left(\frac{R}{1-\ep}\right).$$
\end{enumerate}
\textbf{Case 2.} $t < 1-\frac{1-R-\kappa}{\ep}$

In this case the critical point $\ep(1-t)+\kappa$ is larger than the upper bound $1-R$ of $t'$. Hence maximum is
\begin{align*}
G_2(t,\kappa)&:=g(t,1-R,\kappa)\\
&=(\kappa-1+R)+h(\kappa,t-\kappa,1-R-\kappa,R-t+\kappa)+(t+1-R)\log_q\ep+(1+R-t)\log_q(1-\ep).
\end{align*}
Differentiating both sides with respect to $t$ while keeping $\kappa$ fixed, where the range of $t$ is taken as $\kappa \leq t \leq \min\{1-\frac{1-R-\kappa}{\ep}, 1-R\}$, we get
$$\frac{\partial G_2}{\partial t}=\log_q\left(\frac{\ep(R+\kappa-t)}{(t-\kappa)(1-\ep)}\right).$$
Solving for $\frac{\partial G_2}{\partial t}=0$, we have $t=R\ep+\kappa$. We also see that $G_2$ attains local maximum at this point. We already see that $\kappa \leq R\ep+\kappa$. Hence we need to compare the critical point with $\min\{1-\frac{1-R-\kappa}{\ep}, 1-R\}$. First in order that the range of $t$ is nonempty, we must have
$\kappa \leq \min\{1-\frac{1-R-\kappa}{\ep},1-R\}$, which is true if and only if $1-\frac{R}{1-\ep} \leq \kappa \leq 1-R$. We can further divide into two subcases:

\textbf{Case 2a.} $1-\frac{R}{1-\ep} \leq \kappa \leq 1-R-R\ep$

In this case we have $1-\frac{1-R-\kappa}{\ep} \leq R\ep+\kappa \leq 1-R$. Hence the maximum occurs at $t=1-\frac{1-R-\kappa}{\ep}$. Note that this value of $t$ is precisely the one in which $t'=1-R=\ep(1-t)+\kappa$. Therefore it is covered in \textbf{Case 1} already, and the maximum so obtained cannot be larger than the value calculated in that case. Note that this case can only happen when $0 < R  < \frac{1}{1+\ep}$.

\textbf{Case 2b.} $\max\{0, 1-R-R\ep\} < \kappa \leq 1-R$

In this case we have $1-R < R\ep+\kappa < 1-\frac{1-R-\kappa}{\ep}$. Hence the maximum occurs at $t=1-R$.

Then we get
\begin{align*}
\mathcal{G}_2(\kappa)&:=G_2(1-R,\kappa)\\
&=(\kappa-1+R)+h(\kappa,1-R-\kappa,1-R-\kappa,2R-1+\kappa)+2(1-R)\log_q\ep+2R\log_q(1-\ep).
\end{align*}
Differentiating both sides with respect to $\kappa$, we have
\begin{equation}\label{G2}
\mathcal{G}_2'(\kappa)=\log_q\left(\frac{q(1-R-\kappa)^2}{\kappa(2R-1+\kappa)}\right).
\end{equation}
Solving for $\mathcal{G}_2'(\kappa)=0$, we obtain two roots
$$\kappa_{\pm}=1-R+\frac{1 \pm \sqrt{4R(1-R)(q-1)+1}}{2(q-1)}.$$
However under our assumption we require $\kappa \leq 1-R$. It is easy to see that we should then take $\kappa_-$. This is precisely $\kappa_0$ given by (\ref{k0}).

Note that the number
$$N(\kappa)=\frac{q(1-R-\kappa)^2}{\kappa(2R-1+\kappa)}$$
inside the logarithm in right hand side of (\ref{G2}) is a strictly decreasing function within our range of $\kappa$. Hence it suffices to check the value of $N(\kappa)$ at the two bounds to see whether $\kappa_0$ is within our range (in particular $N(\kappa_0)=1$). It is clear that
$$N(1-R)=0,\quad N(1-R-R\ep)=\frac{qR\ep^2}{(1-\ep)(1-R-R\ep)}$$
and
$$\lim_{\kappa \to 0^+} N(\kappa)=+\infty.$$
Then we have two cases again:
\begin{enumerate}
\item $0 < R \leq \frac{1-\ep}{1+(q-1)\ep^2}$

In this case we have $N(1-R-R\ep) \leq 1$, and so $\mathcal{G}_2'(\kappa) \leq 0$ within the range of $\kappa$. This shows that the maximum occurs at $\kappa=1-R-R\ep$. Note that this also implies $t=1-R=1-\frac{1-R-\kappa}{\ep}$. Since this value is already covered in \textbf{Case 2a}, the maximum cannot be greater than in that case and thereby in \textbf{Case 1} too.

\item $\frac{1-\ep}{1+(q-1)\ep^2} < R < 1$

In this case we have $N(1-R-R\ep) > 1$, so one of the critical points (in fact the smaller one) is within our range. That number is exactly $\kappa_0$ defined in (\ref{k0}). Since $\mathcal{G}_2'(\kappa)$ is decreasing in our range, this implies $\mathcal{G}_2$ attains maximum at $\kappa_0$. The maximum will then be
\begin{align*}
\mathcal{G}_2(\kappa_0)&=(\kappa_0-1+R)+h(\kappa_0,1-R-\kappa_0,1-R-\kappa_0,2R-1+\kappa_0)+2(1-R)\log_q\ep+2R\log_q(1-\ep)\\
&=-(1-R)\log_q\left(\frac{\kappa_0}{\ep^2}\right)-R\log_q\left(\frac{2R-1+\kappa_0}{(1-\ep)^2}\right)
\end{align*}
after simplification and applying the relation $\mathcal{G}_2'(\kappa_0)=0$.
\end{enumerate}
Note that in particular when $\frac{1-\ep}{1+(q-1)\ep^2} < R < \frac{1}{1+\ep}$, this value is larger than $\mathcal{G}_2(1-R-R\ep)=-R\ep-(1-R-R\ep)\log_q\left(\frac{1-R-R\ep}{\ep^2}\right)-R(1+\ep)\log_q\left(\frac{R}{1-\ep}\right)=\mathcal{G}_1(1-R-R\ep)$.

Combining all above cases, we finally obtain
$$S_\mathrm{ud}(\ep)=-\sup_{t,t',\kappa}g(t,t',\kappa)=\begin{cases}1-R-\log_q[1+(q-1)\ep^2] &\left(0 < R \leq \frac{1-\ep}{1+(q-1)\ep^2}\right)\\
(1-R)\log_q\left(\frac{\kappa_0}{\ep^2}\right)+R\log_q\left(\frac{2R-1+\kappa_0}{(1-\ep)^2}\right) &\left(\frac{1-\ep}{1+(q-1)\ep^2} < R < 1\right).
\end{cases}$$
This completes the proof of Theorem \ref{Errexpvar}.

\section{Conclusion}\label{Conclusion}
In this paper we carried out an in-depth study on the average decoding error probabilities of the random parity-check matrix ensemble $\R_{m,n}$ over the erasure channel under three decoding principles, namely \emph{unambiguous decoding, maximum likelihood decoding} and \emph{list decoding}.

\begin{enumerate}
\item[(1).] We obtained explicit formulas for the average decoding error probabilities of the ensemble under these three decoding principles and computed the error exponents. We also compare the results with the random $[n,k]_q$ code ensemble studied before. 

\item[(2).] For \emph{unambiguous decoding}, we computed the variance of the decoding error probability of the ensemble and the error exponent of the variance, from which we derived a strong concentration result, that is, under general conditions, the ratio of the decoding error probability of a random code in the ensemble and the average decoding error probability of the ensemble converges to 1 with high probability when the code length goes to infinity.

\end{enumerate}

It might be interesting to extend the results of (2) to general \emph{list decoding} and \emph{maximum likelihood decoding} for the ensemble $\R_{m,n}$. As it turns out, the variance of decoding error probability in these two cases can still be computed, but the expressions are much more complicated and it is difficult to obtain explicit formulas for their error exponents, and hence a concentration result from them. We leave this as an open question for future research. 

It may be also interesting to carry out such computation for the random $[n,k]_q$ code ensemble.



\end{document}